\newtheorem{theorem}{Theorem}
\def\tsc#1{\csdef{#1}{\textsc{\lowercase{#1}}\xspace}}
\begin{document}
	\let\WriteBookmarks\relax
	\def\floatpagepagefraction{1}
	\def\textpagefraction{.001}
	\shorttitle{An Efficient and Fast Transformer-Based PINNs}
	\shortauthors{Barman, Chatterjee, Ray}
	
	\title [mode = title]{A Simple but Efficient Transformer-Based Physics-Informed Neural Network for Incompressible Navier--Stokes Equations}                      
	
	
	\author[1]{Biswanath Barman}
    \author[2]{Debdeep Chatterjee}  
        
        \author[1]{Rajendra K. Ray} 
        \cortext[cor1]{Corresponding author}
        \ead{rajendra@iitmandi.ac.in}
        
        \address[1]{School of Mathematical and Statistical Sciences, Indian Institute of Technology Mandi\\
        Mandi, Himachal Pradesh, 175005, India}
        \address[2]{Department of Computer Science \& Engineering, Sikkim Manipal Institute of Technology, \\
        Rangpo, Sikkim, 737136, India}

\begin{abstract}
Traditional approaches for solving fluid dynamics problems often involve high computational cost, mesh sensitivity, and difficulties in accurately capturing complex flow physics. In addition, conventional physics-informed neural networks (PINNs) frequently exhibit degraded performance for strongly nonlinear and temporally evolving systems. In this work, we propose \textit{PhysicsFormer}, a simple and efficient Transformer-based physics-informed neural network framework designed to improve predictive accuracy and computational efficiency for complex flow problems. The proposed architecture incorporates encoder--decoder multi-head attention to capture long-range temporal dependencies and improve information propagation across spatio-temporal domains. Unlike standard multilayer perceptron-based PINNs, \textit{PhysicsFormer} employs pseudo-sequential spatio-temporal representations together with a dynamics-weighted loss formulation for enhanced convergence and stability. Owing to its lightweight architecture and parallel learning strategy, the proposed framework achieves faster training and lower computational cost compared with existing Transformer-based PINN models. The performance of the proposed framework is investigated using the convection equation, Burgers' equation, lid-driven cavity flow at $Re=100$, flow reconstruction and inverse Navier--Stokes problems for flow past a circular cylinder at $Re=100$, and high-Reynolds-number flow reconstruction at $Re=3900$. For the inverse Navier--Stokes problem at $Re=100$, the proposed framework simultaneously reconstructs the flow field and identifies the governing equation parameters with nearly $0\%$ absolute error under both clean and noisy data conditions. Furthermore, for the flow past a circular cylinder at $Re=3900$, \textit{PhysicsFormer} accurately reconstructs the velocity and pressure fields using only $25$ spatial measurements per snapshot over $100$ temporal snapshots. The obtained results demonstrate that \textit{PhysicsFormer} provides an accurate, robust, and computationally efficient framework for complex time-dependent fluid flow simulations.
\end{abstract}
\begin{keywords}
		Physics-informed neural networks \sep Transformer \sep Fluid flow reconstruction \sep Inverse modeling \sep Navier--Stokes equations \end{keywords}
	
	\date{}
	\maketitle
\section{Introduction}
Fluid flow phenomena, including cavity flows, internal pipe flows, and flows past bluff bodies, are commonly governed by the incompressible Navier--Stokes equations, which form a highly nonlinear and strongly coupled system of partial differential equations. To solve these equations, a broad range of computational fluid dynamics (CFD) methodologies has been developed, including finite element, finite volume, and spectral methods \cite{ding2004simulation, liu1996strongly}. Although these numerical techniques have achieved considerable success in fluid mechanics, large-scale simulations often demand substantial computational resources due to the extremely high degrees of freedom associated with practical engineering applications. In addition, CFD solvers may encounter difficulties when dealing with complex geometries, moving interfaces, or dynamically evolving meshes, where numerical stability and convergence become increasingly challenging. To alleviate the computational burden of high-fidelity simulations, reduced-order modeling (ROM) approaches have been introduced as efficient alternatives for approximating dominant flow dynamics \cite{lucia2004reduced}. Several investigations have explored low-dimensional representations of fluid flows using ROM-based techniques \cite{henshaw2007non, jovanovic2014sparsity, hemati2014dynamic, hemmasian2023reduced}. Nevertheless, many conventional ROM approaches rely on linearized or weakly nonlinear assumptions, which may limit their applicability to strongly nonlinear and complex flow phenomena.

With the rapid advancement of scientific machine learning, physics-informed neural networks (PINNs) \cite{lagaris1998artificial, raissi2019physics} have emerged as a promising framework for solving partial differential equations by embedding physical laws directly into deep learning architectures. Conventional PINN formulations \cite{raissi2019physics} and many of their extensions primarily employ multilayer perceptrons (MLPs) to approximate point-wise solutions and have demonstrated encouraging performance across several scientific and engineering applications. Despite these developments, recent studies have identified important limitations of standard PINNs when applied to problems involving high-frequency structures, sharp gradients, or multiscale dynamics \cite{raissi2018deep, fuks2020limitations, krishnapriyan2021characterizing, wang2022and}. In such cases, PINNs often exhibit spectral bias, producing excessively smooth approximations that deviate from the true physical solution, even when the analytical behavior of the governing equations is relatively simple.

To improve the predictive capability of PINNs in these challenging settings, existing approaches generally follow two principal directions. The first strategy introduces additional data regularization or interpolation using information obtained from numerical simulations or experimental measurements \cite{raissi2017physics, zhu2019physics, chen2021physics}. Although such approaches can improve predictive accuracy, they typically depend on the availability of reliable high-fidelity data, which may be computationally expensive or difficult to obtain experimentally. The second strategy focuses on modified optimization procedures and alternative learning methodologies \cite{mao2020physics, krishnapriyan2021characterizing, wang2021understanding, wang2022and}. However, these approaches often increase computational complexity and training cost. For example, the Seq2Seq framework proposed in \cite{krishnapriyan2021characterizing} requires multiple neural networks to be trained sequentially, thereby increasing computational overhead. Similarly, Neural Tangent Kernel (NTK)-based methods \cite{wang2022and} require the construction of kernel matrices $\mathrm{K} \in \mathbb{R}^{\mathrm{D} \times \mathrm{P}}$, where $\mathrm{D}$ denotes the number of training samples and $\mathrm{P}$ represents the number of trainable parameters. Such formulations become increasingly difficult to scale for large datasets and high-dimensional neural network architectures.

Although several studies have attempted to improve the generalization capability and mitigate the failure modes of PINNs \cite{krishnapriyan2021characterizing}, conventional PINN frameworks based on multilayer perceptron architectures often fail to explicitly capture the temporal dependencies inherent in many physical systems. In traditional numerical schemes such as the finite element method \cite{huebner2001finite}, temporal evolution is naturally incorporated through iterative time-marching procedures, where the solution at time $\mathrm{t}+\Delta \mathrm{t}$ directly depends on the state at time $\mathrm{t}$. In contrast, standard PINNs generally operate in a point-wise learning framework and do not inherently model the sequential temporal structure of partial differential equations. Consequently, the propagation of initial-condition information across the temporal domain may become insufficient, particularly for complex dynamical systems. This limitation frequently leads to failure modes in which the predicted solutions remain accurate near the initial state but gradually deteriorate over time, producing overly smooth approximations that capture primarily low-frequency components while failing to resolve high-frequency solution structures \cite{krishnapriyan2021characterizing}.

In parallel with PINN-based developments, data-driven operator learning approaches have also attracted considerable attention for solving partial differential equations. These methods utilize neural operators to learn nonlinear mappings between input and output function spaces directly from data, enabling rapid inference once training is completed. Among such approaches, the Fourier Neural Operator (FNO) \cite{li2020fourier} has demonstrated strong capability in predicting high-resolution solutions from coarse input representations. Despite their computational efficiency during inference, these approaches generally require large training datasets and often do not explicitly incorporate the governing physical laws into the learning process. As a result, important physical characteristics embedded within the PDE system may not be fully preserved.

A natural direction for overcoming the limited treatment of temporal dependencies in conventional PINNs is the adoption of Transformer-based architectures, which are well known for their ability to model long-range interactions in sequential data through self-attention and encoder--decoder mechanisms \cite{vaswani2017attention}. Transformer variants have achieved remarkable success in numerous application domains; however, adapting architectures originally designed for sequential learning to physics-informed frameworks remains challenging. Key difficulties arise in the representation of spatio-temporal data and the construction of physically consistent loss formulations. Recent studies, such as PINNsFormer \cite{zhao2023pinnsformer}, have explored Transformer-based PINN frameworks for PDE learning. Nevertheless, these architectures often require substantial GPU memory and computational resources for complex problems, limiting their practical applicability in resource-constrained environments. Motivated by these challenges, we introduce an efficient Transformer-based physics-informed neural network framework, termed \textit{PhysicsFormer}, designed to reduce computational overhead while maintaining high predictive accuracy for complex fluid flow problems.

The proposed \textit{PhysicsFormer} transforms conventional point-wise physics-informed learning into a sequential spatio-temporal learning framework through encoder--decoder multi-head cross-attention. Owing to its lightweight and parallel architecture, the proposed framework achieves faster convergence and reduced computational cost compared with existing Transformer-based PINN models. In addition, a weighted sinusoidal activation function together with a dynamic loss-weighting strategy is introduced to improve the representation of high-frequency flow structures and enhance training stability. The effectiveness of the proposed framework is investigated using several benchmark problems, including the convection equation, Burgers' equation, lid-driven cavity flow at $Re=100$, flow reconstruction and inverse Navier--Stokes problems for flow past a circular cylinder at $Re=100$, and high-Reynolds-number flow reconstruction at $Re=3900$. For the lid-driven cavity benchmark at $Re=100$, the predicted velocity and pressure fields show strong agreement with reference numerical solutions with minimal reconstruction error. In the inverse Navier--Stokes problem at $Re=100$, the proposed framework identifies the unknown parameters with nearly $0\%$ absolute error under both clean and noisy data conditions. Furthermore, for the flow past a circular cylinder at $Re=3900$, \textit{PhysicsFormer} accurately reconstructs the velocity and pressure fields from sparse measurement data while preserving the complex wake structures across different temporal snapshots. These results demonstrate that the proposed framework can efficiently and accurately capture complex spatio-temporal flow dynamics.

\textbf{Main Contributions:}
\begin{itemize}

    \item We propose \textit{PhysicsFormer}, a simple and efficient Transformer-based physics-informed neural network that incorporates encoder--decoder multi-head cross-attention to improve temporal dependency learning in nonlinear flow problems.

    \item A lightweight architecture together with a trainable weighted sinusoidal activation function, $w\sin(t)$, and a dynamic loss-weighting strategy is introduced to improve convergence, high-frequency feature representation, and computational efficiency.

    \item The failure modes of conventional PINNs are analyzed using the convection equation, highlighting the limitations of point-wise learning for temporally evolving systems.

    \item The proposed framework is validated on several benchmark problems, including the Burgers equation, lid-driven cavity flow at $Re=100$, flow reconstruction and inverse Navier--Stokes problems for flow past a circular cylinder at $Re=100$, and high-Reynolds-number flow reconstruction at $Re=3900$.

    \item \textit{PhysicsFormer} accurately reconstructs velocity, pressure, vorticity, and streamline fields from sparse measurements, while the inverse Navier--Stokes framework identifies unknown parameters with nearly $0\%$ absolute error under both clean and noisy data conditions.

\end{itemize}
The remainder of this paper is organized as follows. Section~\ref{sec:relatedworks} reviews recent developments in PINNs, including approaches proposed to overcome the failure modes of conventional PINNs and recent Transformer-based formulations. Section~\ref{sec:preliminary} presents the preliminaries of PINNs together with the proposed activation function. The formulation and methodology of the proposed \textit{PhysicsFormer} framework are described in Section~\ref{sec:methodology}. Section~\ref{sec:numerical_study} discusses the benchmark problems, numerical experiments, and flow reconstruction studies. Finally, conclusions and future directions are summarized in Section~\ref{sec:conclusion}. Additional hyperparameter sensitivity analyses of the proposed framework are provided in Appendices~A--C.

\section{Related Works}
\label{sec:relatedworks}
\textbf{Traditional Numerical Methods:} 
Obtaining analytical solutions for partial differential equations (PDEs) remains a fundamental challenge in many areas of scientific computing. Consequently, PDEs are commonly discretized over computational meshes and solved using numerical approaches such as finite difference methods \cite{sod1978survey}, finite element methods \cite{huebner2001finite}, and spectral methods \cite{bernardi1997spectral}. Although these techniques have achieved considerable success in modeling complex physical systems, they often require substantial computational resources and long simulation times for large-scale or highly nonlinear problems \cite{umetani2018learning}. In addition, conventional numerical solvers may exhibit limited efficiency when applied to inverse problems involving parameter estimation or incomplete observational data.

\textbf{Physics-informed Neural Networks:} 
Physics-informed neural networks (PINNs) \cite{raissi2019physics} have emerged as an effective deep learning framework for solving partial differential equations without relying on large labeled datasets. In this methodology, the governing equations together with the associated initial and boundary conditions are embedded directly into the loss function of the neural network \cite{raissi2019physics, yu2018deep}. By minimizing the residuals of the underlying physical laws during training, the network learns solutions that remain consistent with the prescribed physics. Owing to this capability, PINNs have attracted considerable attention across a broad range of scientific and engineering applications, including fluid dynamics, solid mechanics, and quantum systems \cite{carleo2019machine, yang2020physics}. In recent years, several studies have focused on improving the learning behavior and predictive performance of PINNs through enhanced training strategies and architectural modifications \cite{mao2020physics, wang2021understanding, wang2022and, wang2022auto, khademi2025physics, farea2025learnable}. These developments have contributed to improvements in convergence, generalization capability, and model interpretability \cite{cuomo2022scientific}. Nevertheless, conventional PINNs still encounter difficulties when dealing with strongly nonlinear flows, long-range temporal interactions, and multiscale physical phenomena. In particular, the direct application of standard PINNs may become inadequate for stiff or highly complex PDE systems. To overcome these challenges, modified formulations such as asymptotic-preserving PINNs have been proposed for multiscale kinetic and transport equations \cite{jin2023asymptotic, liu2025asymptotic}.

\textbf{Operator Learning:} 
Operator learning has emerged as an important branch of data-driven deep learning, where neural operators are trained to approximate solution operators associated with partial differential equations. This class of methods is well suited to mapping input fields to output fields in a variety of physical settings, including the prediction of future fluid states from past observations and the estimation of internal stress distributions in solid mechanics \cite{lu2021learning}. Representative models in this area include the Fourier Neural Operator \cite{li2020fourier} and several related extensions, such as those reported in \cite{li2023geometry, rahman2022u}. A number of recent studies \cite{yin2022continuous} have also focused on learning the temporal evolution of PDE systems. However, many of these approaches depend on large training datasets and do not always incorporate the underlying physical structure of the governing equations in an explicit way.

\textbf{PINN Failure Modes:} 
Although physics-informed neural networks (PINNs) \cite{raissi2019physics} have demonstrated considerable promise for solving partial differential equations, several recent studies have identified important limitations associated with their training and predictive capabilities, particularly for problems involving high-frequency components, multiscale dynamics, and strongly nonlinear behavior \cite{fuks2020limitations, raissi2018deep, braga2021self, krishnapriyan2021characterizing, zhao2023pinnsformer, wang2022and}. These challenges have motivated the development of alternative strategies, including improved network architectures, modified learning frameworks, and enhanced data-driven formulations \cite{han2018solving, lou2021physics, wang2021understanding, wang2022and}. A deeper understanding of the underlying failure mechanisms of PINNs is therefore essential for the reliable simulation of complex physical systems. One commonly reported limitation is the difficulty of conventional PINNs in accurately representing high-frequency solution structures and temporally evolving chaotic flow dynamics.

\textbf{Transformer-Based Models:} 
The Transformer architecture \cite{vaswani2017attention} has received significant attention due to its ability to model long-range dependencies through self-attention mechanisms, leading to major advances in natural language processing applications \cite{kalyan2021ammus}. Owing to its strong sequence-learning capability, the Transformer framework has subsequently been extended to several other domains, including computer vision, speech recognition, and time-series forecasting \cite{dong2018speech, han2022survey, wen2022transformers}. In recent years, growing interest has emerged in applying Transformer-based models to problems governed by partial differential equations \cite{cao2021choose, wu2024transolver, zhao2023pinnsformer, zhu2026physicssolver, zhaopinnsformer}. Despite these developments, the integration of Transformer architectures within physics-informed neural network frameworks remains relatively unexplored, particularly for complex fluid flow applications. Furthermore, there is still a need for computationally efficient Transformer-based PINN architectures capable of accurately learning nonlinear spatio-temporal dynamics associated with PDE systems.

\section{Preliminary}
\label{sec:preliminary}
\subsection{Physics-Informed Neural Networks}
Let us examine the initial-boundary value problem:

\begin{align}
\mathcal{D}_{\mathbf{x}, \mathrm{t}}[\boldsymbol{u}(\mathbf{x}, \mathrm{t})] &= \mathrm{f}(\mathbf{x}, \mathrm{t}), && \mathbf{x} \in \Omega, \; \mathrm{t} \in (0, T] 
\label{eq:pde}\\
\mathcal{B}_{\mathbf{x}, \mathrm{t}}[\boldsymbol{u}(\mathbf{x}, \mathrm{t})] &= \mathrm{g}(\mathbf{x}, \mathrm{t}), && \mathbf{x} \in \partial \Omega, \; \mathrm{t} \in (0, T] 
\label{eq:bc}\\
\boldsymbol{u}(\mathbf{x}, 0) &= \mathrm{h}(\mathbf{x}), && \mathbf{x} \in \bar{\Omega}
\label{eq:ic}
\end{align}

Let $\Omega \subset \mathbb{R}^d$ be an open set, with $\bar{\Omega}$ representing its closure. The function $\boldsymbol{u}: \bar{\Omega} \times [0, T] \rightarrow \mathbb{R}$ denotes the required solution, where $\mathbf{x} \in \Omega$ is a spatial vector variable and $\mathrm{t}$ signifies time. The operators $\mathcal{D}_{\mathbf{x}, \mathrm{t}}$ and $\mathcal{B}_{\mathbf{x}, \mathrm{t}}$ are spatial-temporal differential operators. The problem data consists of the forcing function $f: \Omega \rightarrow \mathbb{R}$, the boundary condition function $g: \partial \Omega \times (0, T]$, and the initial condition function $h: \bar{\Omega} \rightarrow \mathbb{R}$. Moreover, sensor data within the interior of the domain may be accessible. We presume that the data are enough and suitable for a well-defined problem. Time-independent problems and other data types can be addressed in a similar manner; thus, we will utilize equations ~\eqref{eq:pde}–\eqref{eq:ic} as a framework. According to \cite{raissi2019physics}, let $\boldsymbol{u}(\mathbf{x}, \mathrm{t})$ be represented by the output $\boldsymbol{u}(\mathbf{x}, \mathrm{t} ; \mathbf{w})$ of a deep neural network, with inputs $\mathbf{x}$ and $\mathrm{t}$ (in the context of a PDE system, this would entail a neural network with many outputs).

The value of $\mathcal{D}_{\mathbf{x}, \mathrm{t}}[\boldsymbol{u}(\mathbf{x}, \mathrm{t} ; \mathbf{w})]$ and $\mathcal{B}_{\mathbf{x}, \mathrm{t}}[\boldsymbol{u}(\mathbf{x}, \mathrm{t} ; \mathbf{w})]$ can be calculated quickly and accurately via reverse-mode automatic differentiation \cite{baydin2018automatic}. The network weights $\mathbf{w}$ are optimized by minimizing a loss function that penalizes the output for failing to meet conditions ~\eqref{eq:pde}–\eqref{eq:ic}:

\begin{equation}
\mathcal{L}_{\text{PINNs}}(\mathbf{w})=
\mathcal{L}_{data}(\mathbf{w})+
\mathcal{L}_{residual}(\mathbf{w})+
\mathcal{L}_{bc}(\mathbf{w})+
\mathcal{L}_{ic}(\mathbf{w})
\label{eq:pinns_loss}
\end{equation}

where $\mathcal{L}_{data}$ denotes the loss term associated with sample data (if applicable), whereas $\mathcal{L}_{residual}$, $\mathcal{L}_{bc}$, and $\mathcal{L}_{ic}$ represent the loss terms related to the violation of the PDE ~\eqref{eq:pde}, the boundary condition ~\eqref{eq:bc}, and the initial condition ~\eqref{eq:ic}, respectively:

\begin{equation}
\begin{aligned}
\mathcal{L}_{data}(\mathbf{w}) &= \frac{1}{\mathcal{N}_d} \sum_{i=1}^{\mathcal{N}_d}\left|\boldsymbol{u}\left(\mathbf{x}_d^i, \mathrm{t}_d^i ; \mathbf{w}\right)-y_d^i\right|^2 \\
\mathcal{L}_{residual}(\mathbf{w}) &= \frac{1}{\mathcal{N}_r} \sum_{i=1}^{\mathcal{N}_r}\left|\mathcal{D}_{\mathbf{x}, \mathrm{t}}\left[\boldsymbol{u}\left(\mathbf{x}_r^i, t_r^i ; \mathbf{w}\right)\right]-f\left(\mathbf{x}_r^i, \mathrm{t}_r^i\right)\right|^2 \\
\mathcal{L}_{bc}(\mathbf{w}) &= \frac{1}{\mathcal{N}_b} \sum_{i=1}^{\mathcal{N}_b}\left|\mathcal{B}_{\mathbf{x}, \mathrm{t}}\left[\boldsymbol{u}\left(\mathbf{x}_b^i, \mathrm{t}_b^i ; \mathbf{w}\right)\right]-g\left(\mathbf{x}_b^i, \mathrm{t}_b^i\right)\right|^2 \\
\mathcal{L}_{ic}(\mathbf{w}) &= \frac{1}{\mathcal{N}_0} \sum_{i=1}^{\mathcal{N}_0}\left|\boldsymbol{u}\left(\mathbf{x}_0^i, 0 ; \mathbf{w}\right)-h\left(\mathbf{x}_0^i\right)\right|^2
\end{aligned}
\end{equation}

Let $\left\{\mathbf{x}_d^i, \mathrm{t}_d^i, y_d^i\right\}_{i=1}^{\mathcal{N}_d}$ represent sensor data (if available), $\left\{\mathbf{x}_0^i\right\}_{i=1}^{\mathcal{N}_0}$ denote initial condition points, $\left\{\mathbf{x}_b^i, \mathrm{t}_b^i\right\}_{i=1}^{\mathcal{N}_b}$ signify boundary condition points, and $\left\{\mathbf{x}_r^i, \mathrm{t}_r^i\right\}_{i=1}^{\mathcal{N}_r}$ indicate residual ("collocation") points randomly distributed within the domain $\Omega$. Here, $\mathcal{N}_d, \mathcal{N}_0, \mathcal{N}_b$, and $\mathcal{N}_r$ correspond to the total counts of sensor, initial, boundary, and residual points, respectively. The network weights $\mathbf{w}$ can be optimized by minimizing the overall training loss $\mathcal{L}_{\text {PINNs}}(\mathbf{w})$ by conventional gradient descent methods \cite{kingma2014adam} employed in deep learning. A schematic illustration of the PINNs framework is presented in Figure.~\ref{fig:pinns_flowchart}.

\begin{theorem}[Universal Approximation Theorem] \cite{hornik1991approximation} \label{thm:universal_approx}
 Let $\sigma$ denote a continuous, bounded, and non-constant activation function. For any continuous function $f$ defined on a compact subset $\mathrm{K} \subset \mathbb{R}^n$ and for any $\epsilon>0$, there exists a feedforward neural network with a single hidden layer and a limited number of neurons such that the network's output $\hat{\mathrm{f}}$ approximates $\mathrm{f}$ to within $\epsilon$, i.e.,

$$
|\mathrm{f}(\mathbf{x})-\hat{\mathrm{f}}(\mathbf{x})|<\epsilon \quad \forall \mathbf{x} \in\mathrm{K}
$$

This theorem states that a feedforward neural network, equipped with at least one hidden layer containing a sufficient number of neurons, may approximate any continuous function to an arbitrary degree of accuracy.
\end{theorem}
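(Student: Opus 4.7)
The plan is to follow the functional-analytic density argument of Cybenko and Hornik: identify the hypothesis class as a linear subspace of $C(\mathrm{K})$ and use the Hahn--Banach theorem together with the Riesz representation theorem to reduce uniform density to a \emph{discriminatory} property of $\sigma$.

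First I would introduce the set of single-hidden-layer network outputs
\[
\mathcal{N}(\sigma)=\left\{\sum_{j=1}^{N}\alpha_{j}\,\sigma(\mathbf{w}_{j}\cdot\mathbf{x}+b_{j}) : N\in\mathbb{N},\ \alpha_{j},b_{j}\in\mathbb{R},\ \mathbf{w}_{j}\in\mathbb{R}^{n}\right\},
\]
and show that $\mathcal{N}(\sigma)$ is dense in $C(\mathrm{K})$ under the uniform norm; the theorem then follows by picking an approximant within $\epsilon$ of $\mathrm{f}$. Arguing by contradiction, suppose the closure $\overline{\mathcal{N}(\sigma)}$ is a proper closed subspace of $C(\mathrm{K})$. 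By Hahn--Banach there exists a nonzero continuous linear functional $\Lambda$ on $C(\mathrm{K})$ that vanishes on $\mathcal{N}(\sigma)$, and by Riesz representation $\Lambda(g)=\int_{\mathrm{K}} g\,d\mu$ for some finite signed Borel measure $\mu\neq 0$. The vanishing condition becomes
\[
\int_{\mathrm{K}} \sigma(\mathbf{w}\cdot\mathbf{x}+b)\,d\mu(\mathbf{x}) = 0 \qquad \forall\,\mathbf{w}\in\mathbb{R}^{n},\ b\in\mathbb{R},
\]
and the aim is to deduce $\mu=0$, yielding the desired contradiction.

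The main obstacle, and the heart of the argument, is establishing this discriminatory property for a general bounded, non-constant, continuous $\sigma$. My plan here is to exploit non-constancy by selecting $b_{1},b_{2}$ with $\sigma(b_{1})\neq\sigma(b_{2})$ and then, via affine rescalings, use combinations of the form $\alpha\,\sigma(\lambda\mathbf{w}\cdot\mathbf{x}+b_{1})+\beta\,\sigma(\lambda\mathbf{w}\cdot\mathbf{x}+b_{2})$ to approximate the trigonometric ridge functions $\mathbf{x}\mapsto\cos(\mathbf{w}\cdot\mathbf{x})$ and $\mathbf{x}\mapsto\sin(\mathbf{w}\cdot\mathbf{x})$ uniformly on the compact set $\mathrm{K}$. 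Boundedness of $\sigma$ together with the dominated convergence theorem then transfers the hypothesis into the vanishing of the Fourier--Stieltjes transform
\[
\widehat{\mu}(\mathbf{w})=\int_{\mathrm{K}} e^{i\mathbf{w}\cdot\mathbf{x}}\,d\mu(\mathbf{x})=0 \qquad \forall\,\mathbf{w}\in\mathbb{R}^{n}.
\]
Uniqueness of the Fourier transform on finite signed Borel measures forces $\mu=0$, completing the contradiction and hence proving density. The delicate ingredient is the rescaling lemma that manufactures trigonometric approximants from a merely bounded non-constant $\sigma$; this is where I expect the bulk of the technical work to lie, and where the hypothesis of non-constancy is indispensable, since a constant activation clearly generates only constants in $\mathcal{N}(\sigma)$.
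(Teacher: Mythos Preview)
The paper does not supply its own proof of this statement: Theorem~\ref{thm:universal_approx} is quoted verbatim from Hornik (1991) as an external result, with no accompanying argument. The only proof appearing in that section is for the subsequent Theorem~2, which merely checks that $\phi(t)=w\sin t$ is continuous, bounded, and non-constant (for $w\neq 0$) and then invokes Theorem~\ref{thm:universal_approx} as a black box. So there is nothing in the paper to compare your proposal against; you are, in effect, reconstructing the cited reference rather than the paper.

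On the substance of your sketch: the Hahn--Banach/Riesz reduction to the discriminatory condition is exactly the Cybenko--Hornik framework and is sound. The questionable step is your mechanism for establishing discrimination. Approximating $\cos(\mathbf{w}\cdot\mathbf{x})$ and $\sin(\mathbf{w}\cdot\mathbf{x})$ uniformly on $\mathrm{K}$ by combinations $\alpha\,\sigma(\lambda\mathbf{w}\cdot\mathbf{x}+b_{1})+\beta\,\sigma(\lambda\mathbf{w}\cdot\mathbf{x}+b_{2})$ with only two fixed shifts $b_{1},b_{2}$ does not work for a general bounded continuous non-constant $\sigma$; take, for instance, $\sigma$ to be a compactly supported bump, where such two-shift dilation families cannot uniformly recover an oscillatory function on a large interval. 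Hornik's actual route first pushes $\mu$ forward along $\mathbf{x}\mapsto\mathbf{w}\cdot\mathbf{x}$ to a one-dimensional signed measure $\nu$, then shows that $\int_{\mathbb{R}}\sigma(\lambda t+\theta)\,d\nu(t)=0$ for all $\lambda,\theta$ forces $\nu=0$ by mollifying $\sigma$ against test functions and invoking the distributional Fourier transform (non-constancy ensures $\widehat{\sigma}$ is not supported only at the origin). Your Fourier endpoint $\widehat{\mu}\equiv 0\Rightarrow\mu=0$ is correct, but the bridge to it needs to be rebuilt along these lines rather than through a direct two-shift trigonometric approximation.
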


\begin{theorem} \cite{zhao2023pinnsformer, zhaopinnsformer}
Let $\mathcal{N}$ represent a one-hidden-layer neural network of infinite width, utilizing the activation function $\phi(\mathrm{t}) = w\sin(\mathrm{t})$; thus, $\mathcal{N}$ acts as a universal approximator for any real-valued target function $\mathrm{f}$.
\end{theorem}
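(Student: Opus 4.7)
The plan is to reduce the claim to a direct application of Theorem~\ref{thm:universal_approx} (Hornik's Universal Approximation Theorem) by verifying that $\phi(t)=w\sin(t)$ meets the three hypotheses required there, and by showing that the trainability of $w$ does not restrict the expressive class of the hidden layer.

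First, I would verify that $\phi(t)=w\sin(t)$ is continuous, bounded, and non-constant for any fixed nonzero $w$. Continuity follows from the continuity of $\sin$ together with scalar multiplication; boundedness follows from $|\phi(t)|\leq|w|$ on all of $\mathbb{R}$; and $\phi$ is non-constant whenever $w\neq 0$ because $\sin$ is itself non-constant on every non-degenerate interval. Consequently, $\phi$ satisfies all three assumptions of Theorem~\ref{thm:universal_approx}.

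Second, I would address the subtlety that $w$ is a \emph{trainable} scalar rather than a prescribed constant. Consider the output of a one-hidden-layer network using the proposed activation,
\begin{equation}
\hat{\mathrm{f}}(\mathbf{x}) \;=\; \sum_{i=1}^{N} a_i\, w_i\, \sin\!\bigl(c_i^{\top}\mathbf{x}+b_i\bigr) + a_0,
\end{equation}
where $\{a_i,b_i,c_i,w_i\}$ are free parameters. Defining the effective output weight $\tilde{a}_i := a_i w_i$ shows that the hypothesis class generated by the $w\sin$-activated network coincides with that generated by the standard $\sin$-activated network. The same absorption works if $w$ is instead a single global trainable scalar shared across neurons. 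Applying Theorem~\ref{thm:universal_approx} with $\sigma(t)=\sin(t)$ then yields, for every continuous target $\mathrm{f}$ on any compact $\mathrm{K}\subset\mathbb{R}^n$ and every $\epsilon>0$, a finite width $N$ and parameter choice such that $\sup_{\mathbf{x}\in\mathrm{K}}|\mathrm{f}(\mathbf{x})-\hat{\mathrm{f}}(\mathbf{x})|<\epsilon$. The infinite-width clause in the statement is interpreted in the standard Hornik sense, i.e., width may be taken arbitrarily large in dependence on $\epsilon$.

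The main obstacle is conceptual rather than technical: the proof is essentially a one-line corollary of Hornik's theorem once one recognizes the weight-absorption identity $a_i w_i = \tilde{a}_i$. The only care needed is to state explicitly that trainability of $w$ does not enlarge or shrink the expressible family, and to note the harmless restriction $w\neq 0$ (any optimizer that drives $w$ to zero simply collapses the network to its constant bias term, so universality is preserved on the open set $w\neq 0$ in parameter space).
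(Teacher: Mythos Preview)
Your proposal is correct and takes essentially the same approach as the paper: verify that $\phi(t)=w\sin t$ is continuous, bounded, and non-constant for $w\neq 0$, then invoke Theorem~\ref{thm:universal_approx}. Your second paragraph on the weight-absorption identity $\tilde{a}_i=a_iw_i$ goes beyond the paper, which simply treats $w$ as a fixed nonzero scalar; this extra care is sound but not required by the paper's own argument.
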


\begin{proof}
Let $\phi:\mathbb{R}\to\mathbb{R}$ be defined by $\phi(t)=w\sin t$, where $w\in\mathbb{R}$ is a trainable scalar parameter with $w\neq 0$.

\emph{Continuity.} The sine function $\sin t$ is continuous on $\mathbb{R}$, and multiplication by a constant preserves continuity. Therefore, $\phi(t)=w\sin t$ is continuous for all $t\in\mathbb{R}$.

\emph{Boundedness.} Since $|\sin t|\le 1$ for every $t\in\mathbb{R}$, we obtain
\[
|\phi(t)| = |w||\sin t| \le |w|.
\]
Hence, $\phi(t)$ is bounded on $\mathbb{R}$.

\emph{Non-constancy.} The function $\sin t$ is non-constant, as $\sin 0 = 0$ and $\sin(\pi/2) = 1$. Thus, for any nonzero $w$, the function $\phi(t)=w\sin t$ is also non-constant.

Since $\phi(t)$ is continuous, bounded, and non-constant, it satisfies the conditions required by Theorem~\ref{thm:universal_approx}. Therefore, neural networks employing $\phi(t)=w\sin t$ as the activation function possess the universal approximation property; that is, such networks can uniformly approximate any continuous function on a compact subset of $\mathbb{R}^d$ to arbitrary accuracy.
\end{proof}
\refstepcounter{figure}

\begin{center}
    \includegraphics[width=0.7\textwidth]{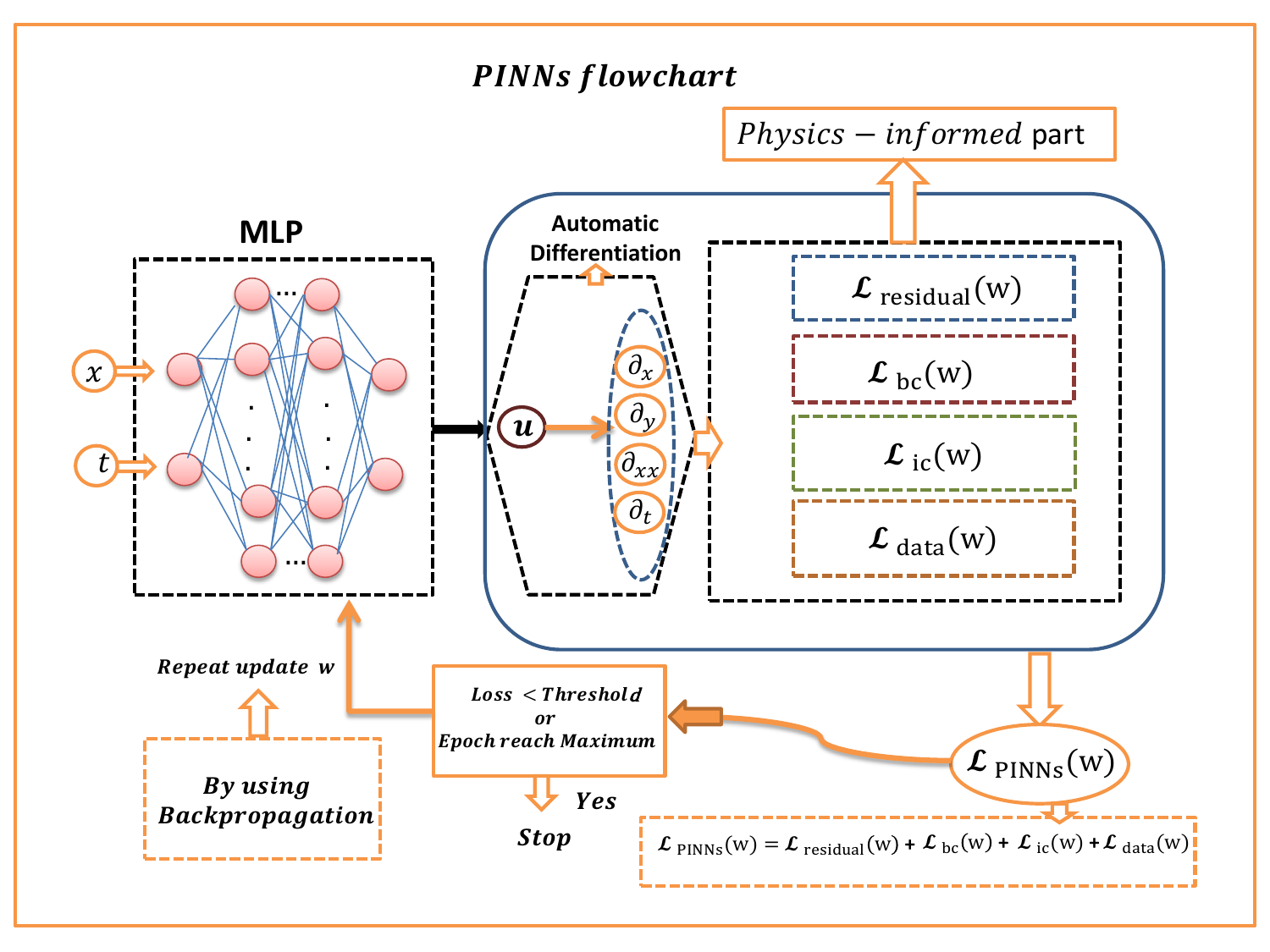}
\end{center}

\begin{quote}
\noindent\textbf{Figure \thefigure.}
Schematic representation of the Physics-Informed Neural Network (PINN) framework for solving partial differential equations.
\end{quote}

\label{fig:pinns_flowchart}

\section{Methodology}
\label{sec:methodology}
Conventional physics-informed neural networks are mainly developed for point-wise prediction and therefore do not explicitly capture the temporal relationships present in many physical systems. In standard PINN formulations, the spatial coordinates $\mathbf{x}$ and temporal variable $\mathrm{t}$ are treated as independent inputs to approximate the solution $u(\mathbf{x},\mathrm{t})$, which may limit their ability to model time-dependent dynamics accurately, particularly for hyperbolic and parabolic partial differential equations. To address this limitation, we propose a Transformer-based physics-informed neural network framework, termed \textit{PhysicsFormer}, which reformulates PDE learning as a sequential spatio-temporal prediction problem. Inspired by the recently developed PINNsFormer framework \cite{zhao2023pinnsformer, zhaopinnsformer}, the proposed architecture combines multi-head encoder--decoder attention with physics-informed learning while significantly reducing computational cost and GPU memory usage. The framework consists of a data embedding module, high-dimensional feature generation, an encoder--decoder attention mechanism, and an output prediction layer. In addition, a weighted sinusoidal activation function, $\phi(\mathrm{t}) = w\sin(\mathrm{t})$, is introduced to improve the representation of high-frequency solution structures while maintaining computational efficiency. A schematic illustration of the proposed framework is shown in Figure~\ref{fig:physicsformer_flowchart}.

\subsection{Data-Embedder}
Conventional physics-informed neural networks operate on point-wise spatio-temporal inputs, whereas Transformer architectures are inherently designed to learn long-range dependencies from sequential data. To bridge this difference, the spatio-temporal inputs are reformulated into temporally ordered pseudo-sequences before being processed by the network. For a spatial coordinate $\mathbf{x} \in \mathbb{R}^{d-1}$ and temporal input $\mathrm{t} \in \mathbb{R}$, the Data-Embedder generates a sequence of neighboring temporal states as
$$
[\mathbf{x}, \mathrm{t}] \stackrel{\text{Embedder}}{\Longrightarrow} \{[\mathbf{x}, \mathrm{t}], [\mathbf{x}, \mathrm{t}+\Delta \mathrm{t}], \ldots, [\mathbf{x}, \mathrm{t}+(k-1)\Delta \mathrm{t}]\}.
$$
Here, $[\cdot]$ denotes the concatenation operation, producing vectorized inputs in $\mathbb{R}^{d}$, while the generated pseudo-sequence is represented in $\mathbb{R}^{k \times d}$. In this framework, a single spatio-temporal point is transformed into multiple temporally correlated states, enabling the network to learn sequential physical dependencies more effectively. The hyperparameters $k$ and $\Delta \mathrm{t}$ determine the sequence length and temporal step size, respectively. In practice, excessively large values of $k$ increase computational and memory costs, whereas large $\Delta \mathrm{t}$ values may weaken the temporal correlation between adjacent sequence elements.

\subsection{High-Dimensional Space Generation}
In addition to the Data-Embedder, the proposed \textit{PhysicsFormer} framework consists of three primary components: a high-dimensional feature generation module, a multi-head attention encoder--decoder, and an output layer. The output layer is implemented as a fully connected multilayer perceptron attached to the final stage of the decoder. Detailed descriptions of the first two components are provided in the following subsections. Unlike architectures that rely on convolutional or recurrent operations, \textit{PhysicsFormer} employs only linear transformations and nonlinear activation functions, resulting in a lightweight and computationally efficient framework. Compared with PINNsFormer \cite{zhao2023pinnsformer, zhaopinnsformer}, the proposed architecture achieves improved computational efficiency while benefiting from the parallel learning capability of Transformer-based models, thereby avoiding the sequential processing limitations commonly associated with recurrent neural networks. Since many partial differential equations involve low-dimensional spatial and temporal variables, directly supplying these inputs to the encoder may not sufficiently capture complex feature interactions. To address this limitation, the proposed framework projects the sequential spatio-temporal inputs into a higher-dimensional latent space using a fully connected multilayer perceptron. Unlike conventional embedding strategies that primarily encode semantic similarity \cite{vaswani2017attention, devlin2019bert}, the proposed high-dimensional space generation module constructs enriched feature representations by combining all spatio-temporal components through a linear projection process, thereby improving the expressive capability of the network.

\subsection{Encoder-Decoder Architecture}
The proposed \textit{PhysicsFormer} adopts an encoder--decoder architecture inspired by the original Transformer framework \cite{vaswani2017attention}. The encoder is composed of multiple identical layers, each containing a self-attention mechanism followed by a feed-forward network. The decoder structure is slightly modified compared with the standard Transformer architecture, where each layer consists of an encoder--decoder cross-attention module and a feed-forward layer. In the proposed framework, the same spatio-temporal embeddings are supplied to both the encoder and decoder, eliminating the need for the decoder to reconstruct dependencies from separate embedded representations. A schematic illustration of the encoder--decoder structure of \textit{PhysicsFormer} is presented in Figure~\ref{fig:physicsformer_flowchart}. The self-attention mechanism within the encoder captures global dependencies among the spatio-temporal inputs, while the cross-attention module in the decoder selectively emphasizes relevant physical relationships during the reconstruction process, thereby improving feature extraction compared with conventional PINN architectures. Unlike sequence prediction tasks in natural language processing or time-series forecasting, physics-informed neural networks primarily focus on approximating the physical state at a given instant. Consequently, identical embeddings are employed for both the encoder and decoder to preserve the consistency of the learned physical representations.

\subsection{Proposed PhysicsFormer} 
While classical physics-informed neural networks focus on point-to-point predictions, their application to pseudo-sequential inputs has not been explored. In \textit{PhysicsFormer}, each point produced in the ordered sequence is referred to as$\left[\mathbf{x}_i, \mathrm{t}_i+\gamma \Delta \mathrm{t}\right]$, is associated with the corresponding approximation, $\hat{\boldsymbol{u}}\left(\mathbf{x}_i, \mathrm{t}_i+\gamma \Delta \mathrm{t}\right)$ for any $\gamma \in \mathbb{N}$, where $\gamma < k$. This method enables the computation of the $n$th-order gradients concerning $\mathbf{x}$ or $\mathrm{t}$ independently for any permissible $n$. For example, for each specified input pseudo sequence $\left\{\left[\mathbf{x}_i, \mathrm{t}_i\right],\left[\mathbf{x}_i, \mathrm{t}_i+\Delta \mathrm{t}\right], \ldots,\left[\mathbf{x}_i, \mathrm{t}_i+\right.\right. (k-1) \Delta \mathrm{t}]\}$, and the relevant approximations $\left\{\hat{\boldsymbol{u}}\left(\mathbf{x}_i, \mathrm{t}_i\right), \hat{\boldsymbol{u}}\left(\mathbf{x}_i, \mathrm{t}_i+\Delta \mathrm{t}\right), \ldots, \hat{\boldsymbol{u}}\left(\mathbf{x}_i, \mathrm{t}_i+(k-\right.\right.$ 1) $\Delta \mathrm{t})\}$, we may calculate the first-order derivatives with respect to $\mathbf{x}$ and $\mathrm{t}$ independently as follows:

\begin{equation}
\begin{aligned}
& \frac{\partial\left\{\hat{\boldsymbol{u}}\left(\mathbf{x}_i, \mathrm{t}_i+\gamma \Delta \mathrm{t}\right)\right\}_{j=0}^{k-1}}
       {\partial\left\{\mathrm{t}_i+j \Delta \mathrm{t}\right\}_{j=0}^{k-1}}
   = \left\{\frac{\partial \hat{\boldsymbol{u}}\left(\mathbf{x}_i, \mathrm{t}_i\right)}{\partial \mathrm{t}_i},
     \frac{\partial \hat{\boldsymbol{u}}\left(\mathbf{x}_i, \mathrm{t}_i+\Delta \mathrm{t}\right)}{\partial\left(\mathrm{t}_i+\Delta \mathrm{t}\right)},
     \ldots,
     \frac{\partial \hat{\boldsymbol{u}}\left(\mathbf{x}_i, \mathrm{t}_i+(k-1) \Delta \mathrm{t}\right)}
          {\partial\left(\mathrm{t}_i+(k-1) \Delta \mathrm{t}\right)}\right\} \\[6pt]
& \frac{\partial\left\{\hat{\boldsymbol{u}}\left(\mathbf{x}_i, \mathrm{t}_i+\gamma \Delta \mathrm{t}\right)\right\}_{j=0}^{k-1}}
       {\partial \boldsymbol{x}_i}
   = \left\{\frac{\partial \hat{\boldsymbol{u}}\left(\boldsymbol{x}_i, \mathrm{t}_i\right)}{\partial \boldsymbol{x}_i},
     \frac{\partial \hat{\boldsymbol{u}}\left(\mathbf{x}_i, \mathrm{t}_i+\Delta \mathrm{t}\right)}{\partial \boldsymbol{x}_i},
     \ldots,
     \frac{\partial \hat{\boldsymbol{u}}\left(\mathbf{x}_i, \mathrm{t}_i+(k-1) \Delta \mathrm{t}\right)}{\partial \mathbf{x}_i}\right\}
\end{aligned}
\label{eq:time-space-derivatives}
\end{equation}

This technique can be easily extended to higher-order derivatives to compute the gradients of sequential approximations in relation to sequential inputs. It refers to residual, boundary, and initial conditions. In contrast to the conventional optimization objective of PINNs in Equation~\eqref{eq:pinns_loss}, which incorporates initial and boundary condition objectives, \textit{PhysicsFormer} distinguishes between the two and uses separate regularization techniques for initial and boundary conditions within its learning framework. The PINNs loss regularizes all sequential outputs for boundary points and residuals, employing a weighted loss to adjust each loss component. This occurs because each generated pseudo-timestep resides inside the same domain as the actual inputs. For example, if $\left[\mathbf{x}_i, \mathrm{t}_i\right]$ is derived from the boundary, then $\left[\mathbf{x}_i, \mathrm{t}_i+\gamma \Delta \mathrm{t}\right]$ also exists at the boundary for any $\gamma \in \mathbb{N}^{+}$. Conversely, for initial points, just $\mathrm{t}=0$ is considered. The condition is regularized with respect to the initial element in the consecutive outputs. This is because only the initial element of the pseudo-sequence accurately fulfills the primary condition at $t=0$. The subsequent time steps are defined as $\mathrm{t}=\gamma \Delta \mathrm{t}$ for any $\gamma \in \mathbb{N}^{+}$, extending beyond the initial conditions.

\refstepcounter{figure}

\begin{center}
    \includegraphics[width=\linewidth]{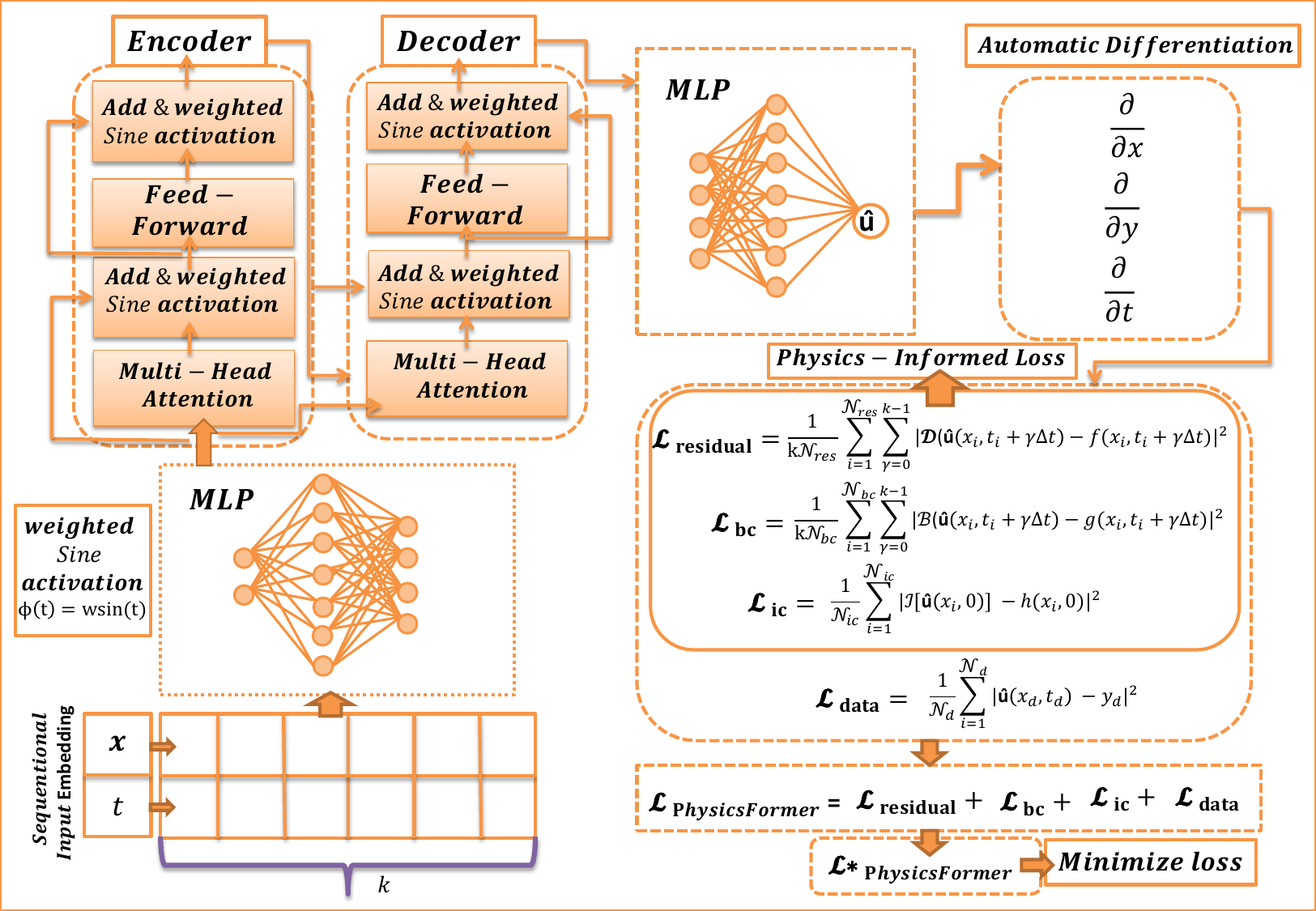}
\end{center}

\begin{quote}
\noindent\textbf{Figure \thefigure.}
Flowchart of \textit{PhysicsFormer} for solving general partial differential equations (PDEs).
\end{quote}

\label{fig:physicsformer_flowchart}
Considering these parameters, we adjust the loss function of the PINNs for the sequential version, as detailed below:

\begin{equation}
\begin{gathered}
\mathcal{L}_{\text {residual}}=\frac{1}{k \mathcal{N}_{\text {res}}} 
\sum_{i=1}^{\mathcal{N}_{\text {res }}} \sum_{\gamma=0}^{k-1}
\left|\mathcal{D}\left[\hat{\boldsymbol{u}}\left(\mathbf{x}_i, \mathrm{t}_i+\gamma \Delta \mathrm{t}\right)\right]
-f\left(\mathbf{x}_i, \mathrm{t}_i+\gamma \Delta \mathrm{t}\right)\right|^2 \\[6pt]
\mathcal{L}_{bc}=\frac{1}{k \mathcal{N}_{bc}} 
\sum_{i=1}^{\mathcal{N}_{bc}} \sum_{\gamma=0}^{k-1}
\left|\mathcal{B}\left[\hat{\boldsymbol{u}}\left(\mathbf{x}_i, \mathrm{t}_i+\gamma \Delta \mathrm{t}\right)\right]
-g\left(\mathbf{x}_i, \mathrm{t}_i+\gamma \Delta \mathrm{t}\right)\right|^2 \\[6pt]
\mathcal{L}_{ic}=\frac{1}{\mathcal{N}_{ic}} 
\sum_{i=1}^{\mathcal{N}_{ic}}
\left|\mathcal{I}\left[\hat{\boldsymbol{u}}\left(\mathbf{x}_i, 0\right)\right]
-h\left(\mathbf{x}_i, 0\right)\right|^2 \\[6pt]
\mathcal{L}_{data}=\frac{1}{\mathcal{N}_d} 
\sum_{i=1}^{\mathcal{N}_d}\left|u\left(\mathbf{x}_d^i, t_d^i \right)-y_d^i\right|^2 \\[6pt]
\mathcal{L}_{\text {PhysicsFormer }}=
\lambda_{\text {residual}} \mathcal{L}_{\text {residual}}
+\lambda_{ic} \mathcal{L}_{ic}
+\lambda_{bc} \mathcal{L}_{bc} 
+\lambda_{data}\mathcal{L}_{data} 
\end{gathered}
\label{eq:loss}
\end{equation}

where $\mathcal{N}_{res}=\mathcal{N}_r$ denotes the residual points as specified in Equation~\eqref{eq:pinns_loss}, $\mathcal{N}_{bc}, \mathcal{N}_{ic}$ denote the quantity of boundary and initial points, respectively, with $\mathcal{N}_{bc}+\mathcal{N}_{ic}=\mathcal{N}_b$. In which location, $\left\{\mathbf{x}_d^i, \mathrm{t}_d^i, y_d^i\right\}_{i=1}^{\mathcal{N}_d}$ denotes experimental data (if available), $\left\{\mathbf{x}_0^i\right\}_{i=1}^{\mathcal{N}_0}$ represents initial condition points. Similar to the PINNs loss, the regularization weights $\lambda_{res}$, $\lambda_{bc}$, $\lambda_{ic}$, and $\lambda_{data}$ balance the importance of the loss terms in \textit{PhysicsFormer}.

During the training process, all initial, boundary, and residual points are conveyed through \textit{PhysicsFormer} to acquire appropriate sequential approximations. The augmented PINNs loss $\mathcal{L}_{\text {PhysicsFormer }}$ in Equation~\eqref{eq:loss} is subsequently improved using gradient-based approaches, including L-BFGS, Adam, or a combination of both. The model parameters are subsequently adjusted until convergence is attained. The sequential solutions are assessed during the testing phase by \textit{PhysicsFormer} sending any arbitrary pair $[\mathbf{x}, \mathrm{t}]$. The initial portion of the sequential approximation precisely corresponds to the following value of $\hat{\boldsymbol{u}}(\mathbf{x}, \mathrm{t})$.

\subsection{Description of Cylinder Wake Data} The resolution of incompressible flow around a circular cylinder is a fundamental challenge in fluid mechanics. This research utilizes data from two-dimensional wake flow around a circular cylinder at a low Reynolds number of $\operatorname{Re}=u_{\infty} D / \nu=100$. The nondimensional free stream velocity is assumed to be $u_{\infty}=1$, the cylinder diameter is $D=1$, and the kinematic viscosity is $\nu=0.01$. The system attains a periodic steady flow state exhibiting an asymmetrical Kármán vortex street \cite{raissi2017physics} in the wake, as illustrated in Figure ~\ref{fig:physicsformer_supervised}. The grid and velocity are nondimensionalized utilizing the free stream velocity $u_{\infty}$ and the cylinder diameter $D$. Boundary conditions consist of a uniform free flow velocity on the left, a zero-pressure outlet on the right boundary situated 25 diameters downstream, and symmetric conditions $[-15,25] \times[-8,8]$ at the upper and lower boundaries of the domain. The dataset was obtained using direct numerical simulation of the Navier-Stokes equations; for further details, refer to this publication \cite{raissi2019physics, raissi2017physics}.

For the purpose of simplification, a rectangle region downstream of the cylinder was selected, with grids of 100 equidistant points along the x-axis and 50 equidistant points along the y-axis, within the spatial domain \cite{lai2024temporal} of $[1,8] \times[-2,2]$. The initial data were gathered on a grid of 5000 points during the specified duration from 0 to 19.9, with an interval of 0.1. To create the original training set, we extracted 1,500 velocity data Figure~\ref{fig:training_data_distribution} points from a total of 1 million, selected randomly from time slices ranging from 0 to 19.9 at 0.1 intervals for the supervised objectives of flow reconstruction and inverse problems involving parameter identification. The mixture of the processed data points and equation points constituted the training set. The training set, comprising solely velocity information, was selected because the Navier-Stokes equations are characterized by velocity data. The Navier-Stokes constraints are included into the loss function to direct the model in precisely forecasting pressure gradients, ensuring that the predicted pressure progressively aligns with the values in the original data. The original dataset served as the validation set, encompassing both velocity information and actual pressure data.

\begin{algorithm}[H]
\caption{PhysicsFormer: An Efficient and Faster Transformer-Based Physics-Informed Neural Network for PDEs}
\label{alg:physicsformer}
\begin{algorithmic}[1]

\REQUIRE Training data $\mathcal{D} = \{(\mathbf{x}_i,\mathrm{t}_i), y_i\}$, differential operator $\mathcal{D}[\cdot]$, boundary operator $\mathcal{B}[\cdot]$, initial operator $\mathcal{I}[\cdot]$
\REQUIRE Hyperparameters: learning rates, loss weights $\lambda_{\{\text{residual, bc, ic, data}\}}$, tolerance, number of epochs $\mathbf{N}_{epochs}$

\STATE \textbf{Step 1: Architecture Initialization}
\STATE Initialize the Transformer encoder–decoder architecture $\mathcal{G}_\theta$ with Multi-Head Cross Attention in both encoder and decoder modules.
\STATE Employ a weighted sine activation function: 
\[
\phi(\mathrm{t}) = w \cdot \sin(\mathrm{t}), \quad w \in \mathbb{R}, \ \text{trainable parameter}
\]

\STATE \textbf{Step 2: Sequential Learning Setup with parallel model}
\STATE Encoder processes a sequential input of $k$ time steps $\{\mathrm{t}, \mathrm{t}+\Delta \mathrm{t}, \ldots, \mathrm{t}+(k-1)\Delta \mathrm{t}\}$.
\STATE Decoder outputs the corresponding $k$-step sequence: $\{\hat{\boldsymbol{u}}(\mathbf{x},\mathrm{t}), \ldots, \hat{\boldsymbol{u}}(\mathbf{x},\mathrm{t}+(k-1)\Delta \mathrm{t})\}$.

\STATE \textbf{Step 3: Physics-Guided Loss Formulation}
\STATE Compute PDE derivatives of $\hat{\boldsymbol{u}}$ with respect to $(\mathbf{x},\mathrm{t})$ using automatic differentiation.
\STATE Define the composite loss terms:
\begin{align*}
\mathcal{L}_{\text{residual}} &= \frac{1}{k \mathcal{N}_{\text{res}}} \sum_{i=1}^{\mathcal{N}_{\text{res}}} \sum_{\gamma=0}^{k-1}
\Big|\mathcal{D}\big[\hat{\boldsymbol{u}}(\mathbf{x}_i,\mathrm{t}_i+\gamma \Delta \mathrm{t})\big] - f(\mathbf{x}_i,\mathrm{t}_i+\gamma \Delta \mathrm{t})\Big|^2, \\
\mathcal{L}_{\text{bc}} &= \frac{1}{k \mathcal{N}_{\text{bc}}} \sum_{i=1}^{\mathcal{N}_{\text{bc}}} \sum_{\gamma=0}^{k-1}
\Big|\mathcal{B}\big[\hat{\boldsymbol{u}}(\mathbf{x}_i,\mathrm{t}_i+\gamma \Delta \mathrm{t})\big] - g(\mathbf{x}_i,\mathrm{t}_i+\gamma \Delta \mathrm{t})\Big|^2, \\
\mathcal{L}_{\text{ic}} &= \frac{1}{\mathcal{N}_{\text{ic}}} \sum_{i=1}^{\mathcal{N}_{\text{ic}}}
\Big|\mathcal{I}\big[\hat{\boldsymbol{u}}(\mathbf{x}_i,0)\big] - h(\mathbf{x}_i,0)\Big|^2, \\
\mathcal{L}_{\text{data}} &= \frac{1}{\mathcal{N}_d} \sum_{i=1}^{\mathcal{N}_d}
\Big|\hat{\boldsymbol{u}}(\mathbf{x}_d^i,\mathrm{t}_d^i) - y_d^i\Big|^2.
\end{align*}

\STATE Define total PhysicsFormer loss:
\[
\mathcal{L}_{\text{PhysicsFormer}} =
\lambda_{\text{residual}} \mathcal{L}_{\text{residual}}
+ \lambda_{\text{bc}} \mathcal{L}_{\text{bc}}
+ \lambda_{\text{ic}} \mathcal{L}_{\text{ic}}
+ \lambda_{\text{data}} \mathcal{L}_{\text{data}}
\]

\STATE \textbf{Step 4: Training Procedure}
\FOR{epoch $= 1$ to $\mathbf{N}_{epochs}$}
    \STATE Train the network using Adam optimizer.
    \IF{loss $\leq$ tolerance}
        \STATE \textbf{break}
    \ENDIF
\ENDFOR
\STATE Fine-tune the network parameters using L-BFGS optimizer until convergence.

\STATE \textbf{Step 5: Output}
\STATE Return optimized parameters $\theta^\ast$ and trained model $\hat{\boldsymbol{u}}(\mathbf{x},\mathrm{t};\theta^\ast)$.

\end{algorithmic}
\end{algorithm}

\refstepcounter{figure}

\begin{center}
    \includegraphics[width=0.6\textwidth]{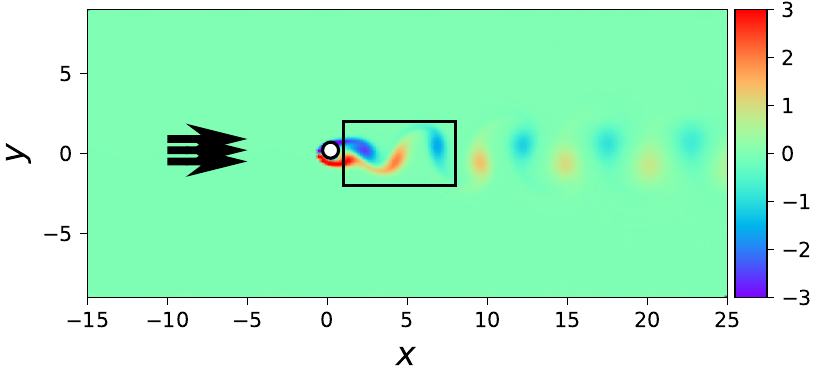}
\end{center}

\begin{quote}
\noindent\textbf{Figure \thefigure.}
The dark-shaded region indicates the vortex-shedding wake region for the flow past a circular cylinder considered in the present study.
\end{quote}

\label{fig:physicsformer_supervised}
\refstepcounter{figure}

\begin{center}
    \includegraphics[width=0.9\textwidth]{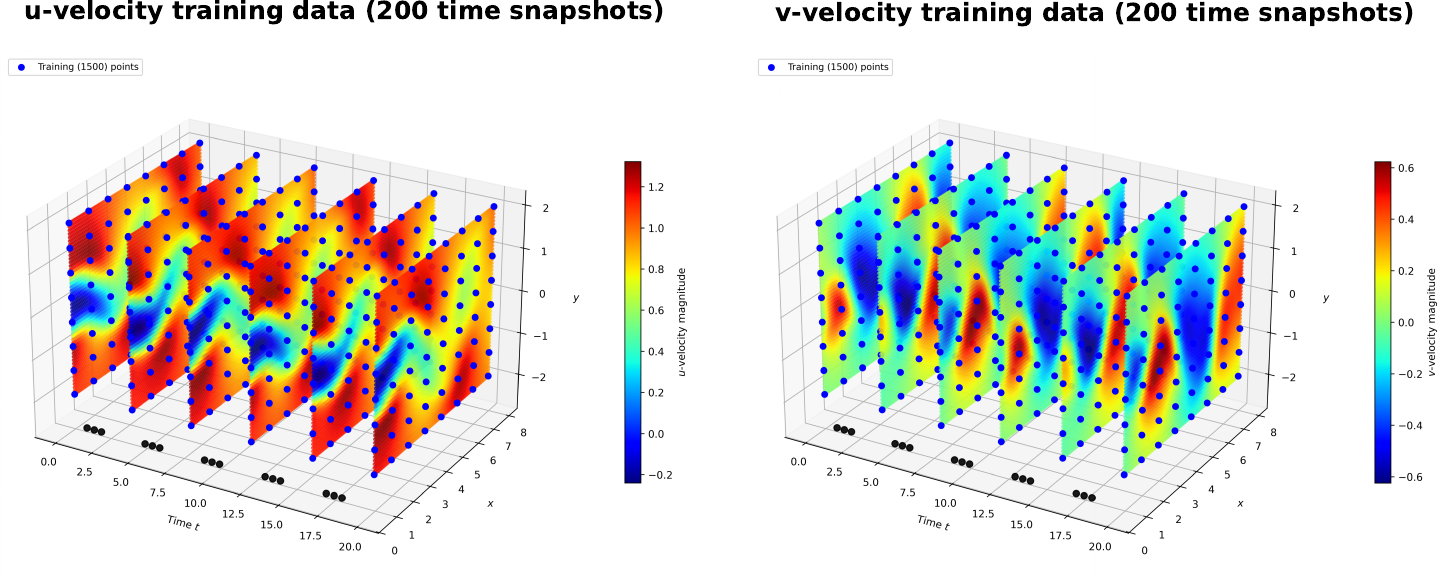}
\end{center}

\begin{quote}
\noindent\textbf{Figure \thefigure.}
Training data distribution for flow past a circular cylinder at $Re=100$. The left and right panels show the sampled $u$-velocity and $v$-velocity fields, respectively. A total of $1500$ spatio-temporal data points are used for training \textit{PhysicsFormer}.
\end{quote}

\label{fig:training_data_distribution}

\section{Numerical Study}
\label{sec:numerical_study}
The primary objective of this work is to develop a fast and computationally efficient Transformer-based physics-informed neural network framework capable of addressing the limitations and failure modes commonly observed in conventional PINNs \cite{krishnapriyan2021characterizing}. Motivated by this goal, we propose \textit{PhysicsFormer}, a lightweight attention-based framework designed for complex fluid flow applications governed by partial differential equations. To evaluate the effectiveness of the proposed method, several benchmark problems are investigated, including the forward solution of the one-dimensional Burgers' equation containing shock structures, inverse parameter identification for the incompressible Navier--Stokes equations involving the unknown convection coefficient ($\lambda_1$) and diffusion coefficient ($\lambda_2$), and flow reconstruction problems governed by the two-dimensional incompressible Navier--Stokes equations. In addition, we investigate lid-driven cavity flow at $Re=100$ and flow reconstruction problems at both $Re=100$ and the high-Reynolds-number regime of $Re=3900$. For the cylinder wake reconstruction problem, the proposed framework accurately reconstructs the velocity, pressure, vorticity, and streamline fields using sparse observational data. Compared with PINNsFormer, the proposed framework achieves improved computational efficiency while requiring fewer training samples. In the Navier--Stokes reconstruction problem, only $1,500$ data points, corresponding to approximately $0.15\%$ of the total dataset containing $1,000,000$ points, are utilized for training, whereas PINNsFormer requires $2,500$ data points. Even when trained using the same $2,500$ data samples, \textit{PhysicsFormer} demonstrates superior predictive performance. The predicted pressure fields and absolute error distributions show clear qualitative improvements, as illustrated in Figure~\ref{fig:comparison_pdf}. For the Burgers' equation, the collocation points are selected as $\mathcal{N}_{res}=2601$, while the boundary and initial condition points are chosen as $\mathcal{N}_{bc}=\mathcal{N}_{ic}=51$. The obtained results are compared with several existing approaches, including PINNs \cite{raissi2019physics}, PINNsFormer \cite{zhao2023pinnsformer, zhaopinnsformer}, QRes \cite{bu2021quadratic}, and First-Layer Sine (FLS) networks \cite{wong2022learning}. Both qualitative visualizations and quantitative error analyses demonstrate the strong predictive capability and computational efficiency of the proposed \textit{PhysicsFormer} framework.

\subsection{Mitigate Failure Modes of PINNs}
\textbf{Convection PDE.} The one-dimensional convection problem is a hyperbolic partial differential equation frequently employed to represent transport phenomena. The system is defined by the formulation incorporating periodic boundary conditions as follows:

$$
\begin{gathered}
u_t + \beta u_x = 0, 
\quad \forall\, x \in [0, 2\pi],\; t \in [0, 1], \\
\text{Initial Condition: } u(x, 0) = \sin(x), 
\quad \text{Boundary Condition: } u(0, t) = u(2\pi, t).
\end{gathered}
$$

where $\beta$ is the convection coefficient. As $\beta$ grows, the solution's frequency increases, making it more challenging for PINNs to estimate. In this instance, we establish $\beta=50$.  This is a significant failure mode of PINNs, as they are unable to capture high-frequency solutions. In contrast, our proposed \textit{PhysicsFormer} accurately predicts higher-frequency solutions by incorporating temporal dependencies into the learning process. We implement soft regularization for PINNs to address that problem and optimize the loss function.  Subsequent to training, we evaluate the realative errors between the predicted solution of the PINN and the analytical solution \cite{zhao2023pinnsformer, zhaopinnsformer}, as illustrated in Figure~\ref{fig:convection_beta50}. The PINN demonstrates efficacy in obtaining satisfactory solutions only for low convection coefficients, failing to perform efficiently as $\beta$ increases, resulting in a relative error approaching $100\%$ at $\beta=50$. Our proposed \textit{PhysicsFormer} effectively captures high-frequency solutions, yielding exact results with absolute error of approximately $1 \times 10^{-5}$.

\refstepcounter{figure}

\begin{center}
    \includegraphics[width=0.80\textwidth]{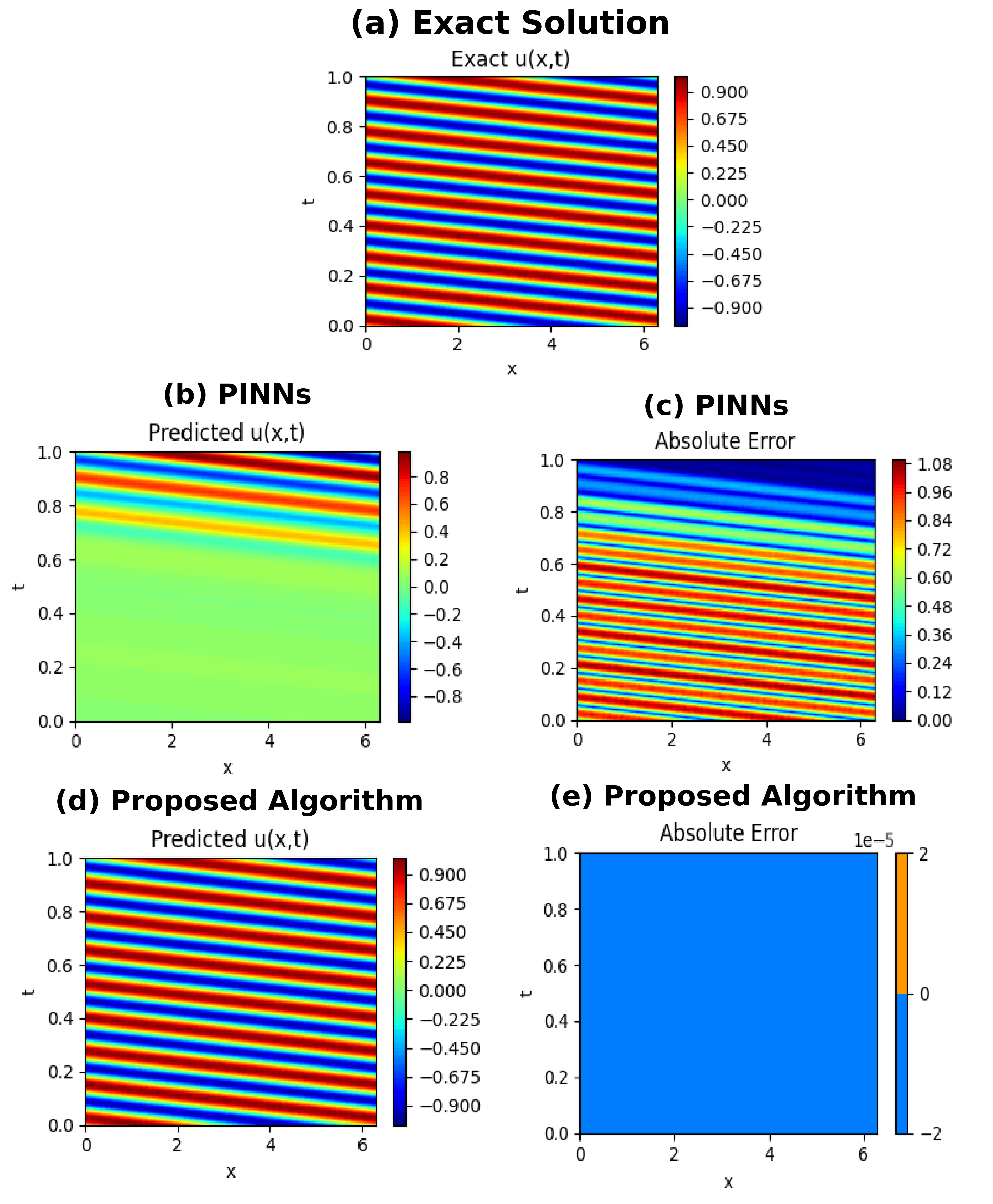}
\end{center}

\noindent\textbf{Figure \thefigure.}
Comparison of the convection equation results at $\beta=50$ obtained using PINNs and the proposed \textit{PhysicsFormer}. Panel (a) presents the exact solution reported in~\cite{zhao2023pinnsformer, zhaopinnsformer}. Panels (b) and (d) correspond to the prediction results generated by PINNs and \textit{PhysicsFormer}, respectively, while panels (c) and (e) show the associated absolute error distributions.

\label{fig:convection_beta50}

\subsection{Forward Problem (Burger's equation):}
In one spatial dimension, the Burgers' equation with Dirichlet boundary conditions is expressed as

\begin{equation}
\begin{aligned}
u_t + u\,u_x - \frac{0.01}{\pi} u_{xx} &= 0, 
\quad x \in [-1,1], \; t \in [0,1], \\
u(0,x) &= -\sin(\pi x), \\
u(t,-1) &= u(t,1) = 0 .
\end{aligned}
\label{eq:burgers}
\end{equation}
where $\nu=\frac{0.01}{\pi}>0$ is the coefficient of kinematic viscosity.  Equation~\ref{eq:burgers} was initially presented by Bateman \cite{bateman1915some} and then investigated by Burgers \cite{burgers1948mathematical}, after whom the equation is commonly known as Burgers' equation. This equation is essential in the investigation of nonlinear waves, offering as a mathematical model in turbulence problems, shock wave theory, and continuous stochastic processes. A wide range of scientists is dedicated to examining the exact and numerical solutions of this equation.

This study introduces a novel hybrid architecture, \textit{PhysicsFormer}, that combines the capabilities of the Transformer framework with the physics-informed neural network paradigm. The model is designed to capture global dependencies in the input data via cross-attention mechanisms and to enforce local physics-based constraints through the PDE residuals. The input undergoes a linear transformation into a higher-dimensional embedding space, which is then processed using the encoder-decoder architecture of the Transformer. Each block employs multi-head cross-attention layers to capture long-range dependencies, while residual connections and weighted sine-based activation functions ($\phi(\mathrm{t})$) promote stable gradient flow and improve the representation of oscillatory solutions. The output is then reintroduced into physical space through a concluding feed-forward network.

The \textit{PhysicsFormer} framework uses a physics-informed loss function for training and integrates multiple input sources. During the training of \textit{PhysicsFormer}, we utilize collocation points, with $\mathcal{N}_{ic} = \mathcal{N}_{bc} = 51$ initial and boundary points, and a $51 \times 51$ grid, resulting in $\mathcal{N}_{res} = 2601$ residual points (grid points) Figure~\ref{fig:burgers_training_combined}. This ensures that the model complies with essential physical principles while also satisfying the stated initial and boundary constraints. The collocation method allows the model to generalize efficiently throughout the domain without requiring large labeled data, which is sometimes difficult or expensive to obtain in real-world fluid dynamics situations.

The model is configured using hyperparameters carefully chosen to enhance accuracy and computational efficiency. The embedding dimension was set to $d_{\text{model}}=32$, the hidden dimension of the output MLP to $d_{\text{hidden}}=512$, the number of encoder/decoder layers to $N=1$, the number of attention heads to $2$, and the output dimension to $1$ to represent the scalar field $u$. We employ Xavier uniform initialization for weight initialization, with a small bias of $0.01$ to improve stability during the first training phase.   The optimization employs the L-BFGS algorithm alongside a robust Wolfe line search, which is especially efficient for the smooth loss landscapes of PINNs. Our proposed framework employs approximately $500$ MB of GPU memory, making it lightweight and suitable for deployment on any modern GPU card.   The model was trained on a Google Colab T4 GPU, with an overall training duration of approximately 20 minutes.   Our framework is around twice times more efficient than PINNsFormer, mostly because to the effectiveness of the weighted \textit{Sine} activation function.   We trained our proposed model for $500$ epochs to achieve convergence, resulting in a total loss of $6.0 \times 10^{-6}$, while the physics residual loss decreased to $5.0 \times 10^{-7}$. In the hyperparameter specifications of the PhysicsFormer Burger's equation, we presented Table \ref{tab:physicformer_burger}, indicating that the relative error $L_2$ for these configurations is $2.4 \times 10^{-4}$, compared to $6.7 \times 10^{-4}$ seen in PINNs.

For qualitative comparison, we presented the exact solution of Burger's equation alongside our proposed \textit{PhysicsFormer} prediction and the corresponding absolute error in Figure~\ref{fig:burgers}. Furthermore, Figure~\ref{fig:comparison} illustrates the solution to Burger's equation after predicting three separate time intervals: $t=0.25s$, $t=0.50s$, and $t=0.75s$, indicating that our anticipated answer roughly corresponds with the exact solution. The graph in Figure~\ref{fig:burgers_training_combined} demonstrates that the training loss for the forward issue of Burger's equation attains steady convergence after 500 epochs.

\begin{table}[htbp]
\centering
\caption{\textit{PhysicFormer} model architecture for Burger's equation}
\label{tab:physicformer_burger}
\begin{tabular}{ll}
\hline
\textbf{Parameter} & \textbf{Value} \\
\hline
Transformer embedding dimension ($d_{model}$) & 32 \\
Hidden layer size in output MLP ($d_{hidden}$) & 512 \\
Number of encoder/decoder layers ($N$) & 1 \\
Number of attention heads & 2 \\
Output dimension ($d_{out}$) & 1 \\
Optimizer & L-BFGS (line search: strong Wolfe) \\
Weight initialization & Xavier Uniform, bias = 0.01 \\
Total epochs & 500 \\
\hline
\end{tabular}
\end{table}

\refstepcounter{figure}

\begin{center}
    \includegraphics[width=1.0\linewidth]{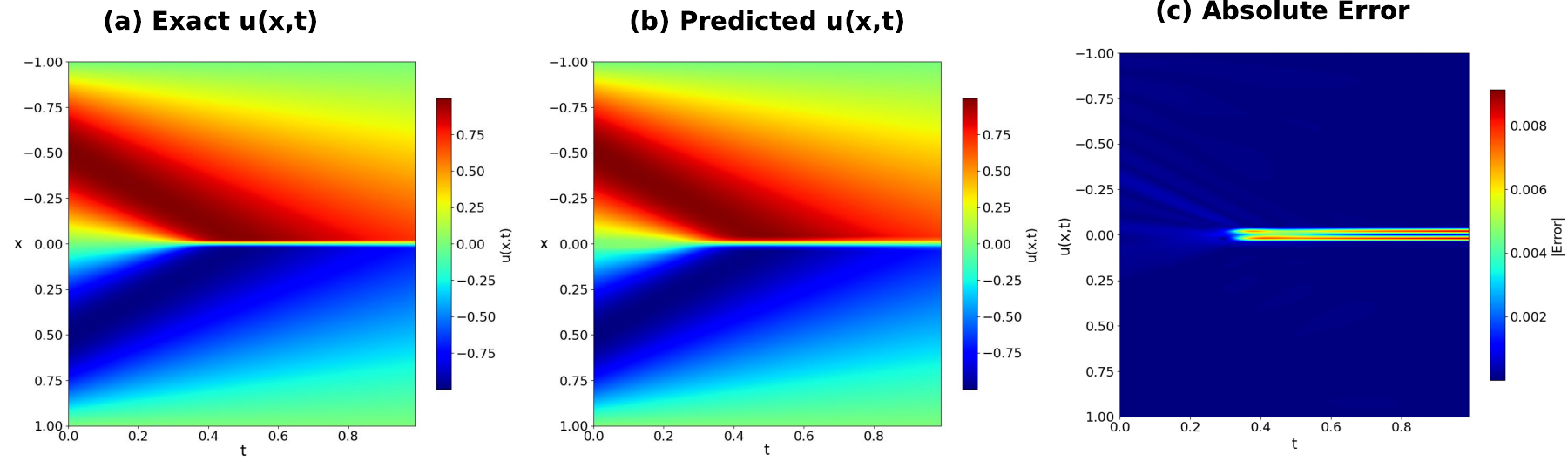}
\end{center}

\noindent\textbf{Figure \thefigure.}
Comparison of Burgers' equation solutions: (a) exact reference solution reported in~\cite{raissi2019physics}, (b) prediction obtained using the proposed \textit{PhysicsFormer} framework, and (c) corresponding absolute error distribution of the proposed method.

\label{fig:burgers}

\refstepcounter{figure}

\begin{center}
    \includegraphics[width=1.0\linewidth]{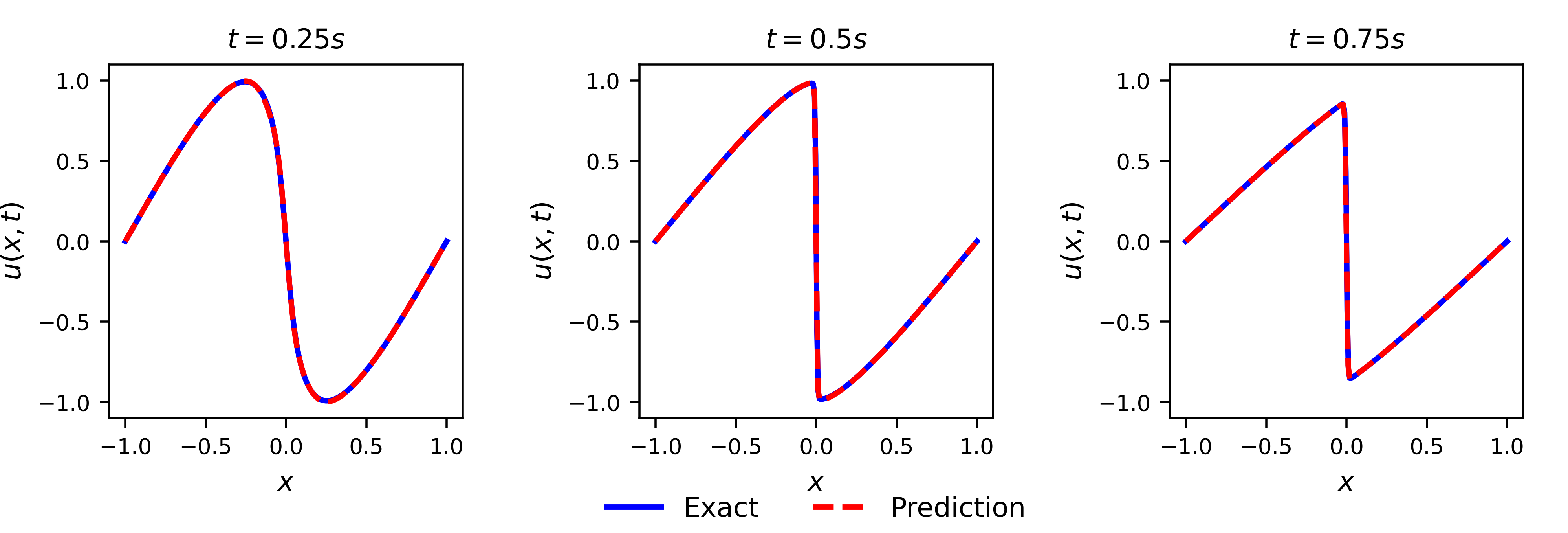}
\end{center}

\noindent\textbf{Figure \thefigure.}
Comparison between the predicted and exact solutions of Burgers' equation at three temporal instances: $t=0.25\,\mathrm{s}$, $t=0.50\,\mathrm{s}$, and $t=0.75\,\mathrm{s}$. The proposed \textit{PhysicsFormer} accurately captures the evolution of the shock structure, with the predicted profiles remaining in close agreement with the exact solutions across all time snapshots.

\label{fig:comparison}

\refstepcounter{figure}

\begin{center}
    \includegraphics[width=1.0\textwidth]{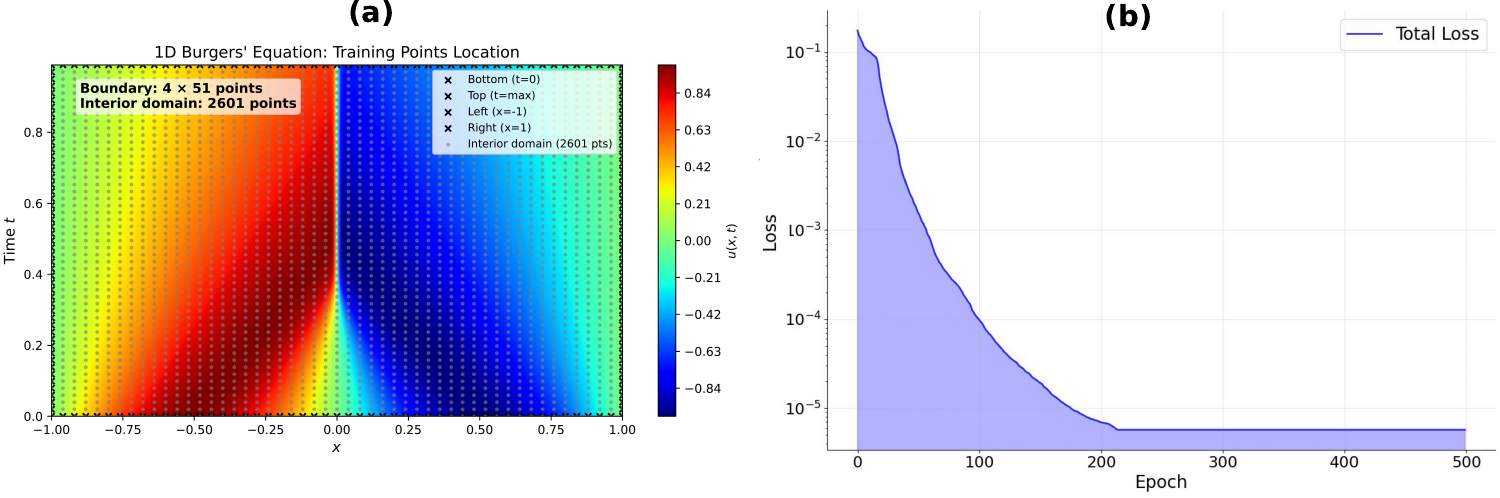}
\end{center}

\noindent\textbf{Figure \thefigure.}
Training data configuration and convergence behavior for Burgers' equation. Panels (a) and (b) illustrate the distribution of collocation/training points and the evolution of the training loss with respect to epochs, respectively.

\label{fig:burgers_training_combined}

\subsection{Lid-Driven Cavity Flow ($Re = 100$)}

To assess the effectiveness of the proposed framework, we consider the classical lid-driven cavity flow problem, which serves as a standard benchmark in computational fluid dynamics. This problem is governed by the steady incompressible Navier--Stokes equations and is widely used to evaluate numerical methods due to its well-defined boundary conditions and nonlinear flow characteristics.

The governing equations in two dimensions are given by
\begin{align}
\frac{\partial u}{\partial x} + \frac{\partial v}{\partial y} &= 0, \label{eq:continuity} \\
u \frac{\partial u}{\partial x} + v \frac{\partial u}{\partial y} + \frac{1}{\rho} \frac{\partial p}{\partial x} - \nu \left( \frac{\partial^2 u}{\partial x^2} + \frac{\partial^2 u}{\partial y^2} \right) &= 0, \label{eq:momentum_x} \\
u \frac{\partial v}{\partial x} + v \frac{\partial v}{\partial y} + \frac{1}{\rho} \frac{\partial p}{\partial y} - \nu \left( \frac{\partial^2 v}{\partial x^2} + \frac{\partial^2 v}{\partial y^2} \right) &= 0, \label{eq:momentum_y}
\end{align}

where $(u, v)$ denote the velocity components in the $x$ and $y$ directions, respectively, and $p$ represents the pressure field. The fluid density is taken as $\rho = 1$, and the kinematic viscosity is set to $\nu = 0.01$, corresponding to a Reynolds number $Re = 100$.

The computational domain is defined as a unit square $(x, y) \in [0,1] \times [0,1]$. The boundary conditions are prescribed as follows:
\begin{itemize}
\item \textbf{Top boundary ($y = 1$):} a constant horizontal velocity is imposed, such that $u = 1$ and $v = 0$.
\item \textbf{Remaining boundaries ($y = 0$, $x = 0$, $x = 1$):} no-slip conditions are enforced, i.e., $u = 0$ and $v = 0$.
\end{itemize}

This configuration generates a primary vortex within the cavity along with secondary flow structures near the corners, making it a suitable test case for evaluating the accuracy and stability of the proposed model.

To solve the lid-driven cavity problem using our proposed framework, PhysicsFormer, we represent the incompressible velocity field through a stream function $\psi$. The velocity components are obtained automatically by automatic differentiation as
\begin{align}
u &= \frac{\partial \psi}{\partial y}, \\
v &= -\frac{\partial \psi}{\partial x}.
\end{align}
This formulation satisfies the continuity equation,
\begin{equation}
\frac{\partial u}{\partial x} + \frac{\partial v}{\partial y} = 0,
\end{equation}
exactly for any differentiable $\psi$. As a result, the learning task is reduced to enforcing only the momentum equations and the boundary constraints.

The steady Navier--Stokes residuals for the two-dimensional cavity flow are written as
\begin{align}
f_u &= u\frac{\partial u}{\partial x} + v\frac{\partial u}{\partial y} + \frac{\partial p}{\partial x}
- \frac{1}{Re}\left(\frac{\partial^2 u}{\partial x^2} + \frac{\partial^2 u}{\partial y^2}\right), \\
f_v &= u\frac{\partial v}{\partial x} + v\frac{\partial v}{\partial y} + \frac{\partial p}{\partial y}
- \frac{1}{Re}\left(\frac{\partial^2 v}{\partial x^2} + \frac{\partial^2 v}{\partial y^2}\right).
\end{align}

The interior physics loss is defined at the collocation points as
\begin{equation}
\mathcal{L}_{\mathrm{PDE}} =
\frac{1}{\mathcal{N}_{\mathrm{int}}}\sum_{i=1}^{\mathcal{N}_{\mathrm{int}}}
\left(
\|f_u^{(i)}\|^2 + \|f_v^{(i)}\|^2
\right).
\end{equation}

The boundary conditions are enforced on all walls. No-slip conditions are applied on the stationary walls, while the top lid moves with unit horizontal velocity:
\begin{equation}
u=1,\quad v=0 \qquad \text{on } y=1,
\end{equation}
and
\begin{equation}
u=0,\quad v=0 \qquad \text{on } x=0,\; x=1,\; y=0.
\end{equation}
Accordingly, the velocity boundary losses are expressed as
\begin{align}
\mathcal{L}_{\mathrm{bc},u} &= \frac{1}{\mathcal{N}_{\mathrm{bnd}}}\sum_{j=1}^{\mathcal{N}_{\mathrm{bnd}}}
\left(u_{\mathrm{pred}}^{(j)} - u_{\mathrm{true}}^{(j)}\right)^2, \\
\mathcal{L}_{\mathrm{bc},v} &= \frac{1}{\mathcal{N}_{\mathrm{bnd}}}\sum_{j=1}^{\mathcal{N}_{\mathrm{bnd}}}
\left(v_{\mathrm{pred}}^{(j)} - v_{\mathrm{true}}^{(j)}\right)^2.
\end{align}

Since the stream function must vanish on the cavity walls, we also impose
\begin{equation}
\psi = 0 \qquad \text{on all boundaries},
\end{equation}
with the corresponding penalty term
\begin{equation}
\mathcal{L}_{\mathrm{bc},\psi} =
\frac{1}{\mathcal{N}_{\mathrm{bnd}}}\sum_{j=1}^{\mathcal{N}_{\mathrm{bnd}}}
\left(\psi_{\mathrm{pred}}^{(j)}\right)^2.
\end{equation}

The total training objective is then given by
\begin{equation}
\mathcal{L}
=
\mathcal{L}_{\mathrm{PDE}}
+
\mathcal{L}_{\mathrm{bc},u}
+
\mathcal{L}_{\mathrm{bc},v}
+
\mathcal{L}_{\mathrm{bc},\psi}.
\end{equation}

For this experiment, the PhysicsFormer architecture uses a sequence length $k=5$, a pseudo-time step $\Delta t = 10^{-3}$, $d_{\mathrm{model}}=32$, $d_{\mathrm{ff}}=256$, $d_{\mathrm{hidden}}=512$, two attention heads, one encoder layer, one decoder layer, and learnable sine activation functions. The model is trained for 2000 epochs using the L-BFGS optimizer with strong-Wolfe line search.

After training, the final total loss reaches approximately $8\times10^{-5}$. The rRMSE errors are $5\times10^{-3}$ for $u$, $7\times10^{-3}$ for $v$, and $6\times10^{-2}$ for pressure. These results indicate that PhysicsFormer achieves notably better accuracy than the low-order LA-PINN~\cite{song2024loss} baseline. The quantitative comparison is reported in Table~\ref{tab:comparison_rRMSE}, while the predicted fields of $u$, $v$, and pressure are visualized in Fig.~\ref{fig:lid_cavity_results}. In addition, the validation against the benchmark solution of Ghia et al.~\cite{ghia1982high} is shown in Fig.~\ref{fig:ghia_validation}.

\begin{table}[htbp]
\centering
\caption{Comparison of relative RMSE(rRMSE) for velocity components and pressure between LA-PINN~\cite{song2024loss} and PhysicsFormer for the lid-driven cavity flow at $Re=100$.}
\label{tab:comparison_rRMSE}
\begin{tabular}{lccc}
\hline
\textbf{Model} & $\mathbf{u}$-velocity & $\mathbf{v}$-velocity & \textbf{Pressure} \\
\hline
LA-PINN~\cite{song2024loss}       & $3.84 \times 10^{-2}$ & $4.92 \times 10^{-2}$ & $2.04 \times 10^{-1}$ \\
\textbf{PhysicsFormer}  & $\mathbf{5 \times 10^{-3}}$ & $\mathbf{7 \times 10^{-3}}$ & $\mathbf{6 \times 10^{-2}}$ \\
\hline
\end{tabular}
\end{table}

\refstepcounter{figure}

\begin{center}
    \includegraphics[width=\textwidth]{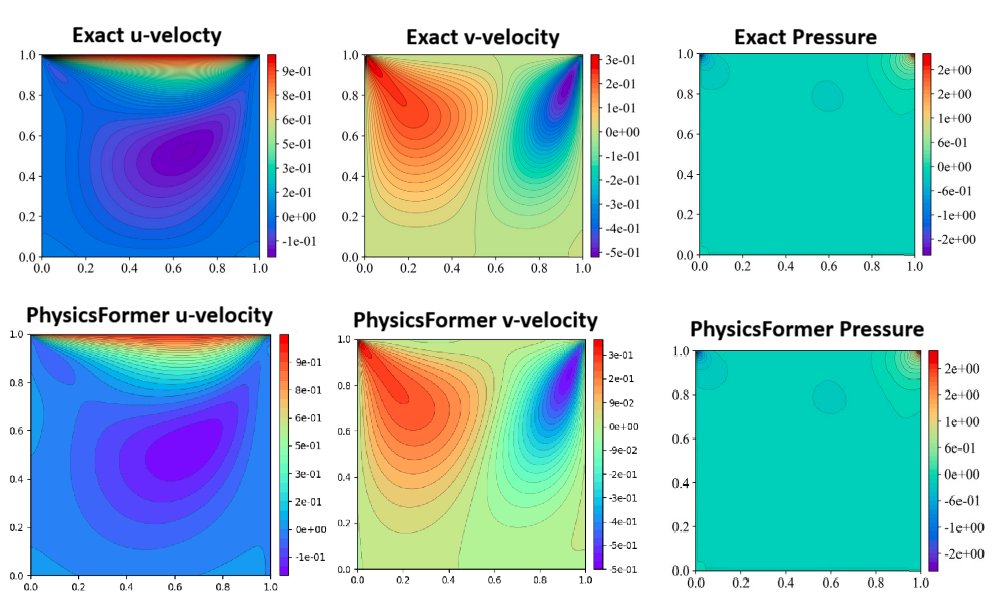}
\end{center}

\vspace{0.5em}

\noindent\textbf{Figure \thefigure.}
Comparison of flow fields for the lid-driven cavity problem at $Re=100$. The figure presents the reference (exact) solution, PhysicsFormer predictions, and corresponding absolute error distributions for the $u$-velocity, $v$-velocity, and pressure fields.

\label{fig:lid_cavity_results}

\refstepcounter{figure}

\begin{center}
    \includegraphics[width=0.90\textwidth]{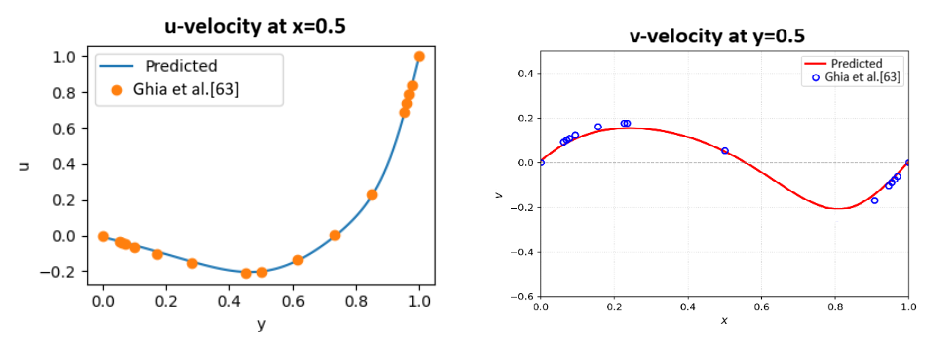}
\end{center}

\vspace{0.5em}

\noindent\textbf{Figure \thefigure.}
Validation of the lid-driven cavity flow solution at $Re=100$ against the benchmark data of Ghia et al.~\cite{ghia1982high} The left panel shows the horizontal velocity ($u$) profile along the vertical centerline ($x=0.5$), while the right panel presents the vertical velocity ($v$) profile along the horizontal centerline ($y=0.5$). The results obtained using PhysicsFormer are in close agreement with the reference data.

\label{fig:ghia_validation}

\subsection{Flow Field Reconstruction of Cylinder Wake at $Re=100$}
To illustrate the model's enhancement, a fundamental problem involving the flow of an incompressible fluid is selected, characterized by the Navier-Stokes (NS) equations. The Navier-Stokes equations describe the dynamics of viscous, incompressible fluids. They are essential in fluid dynamics, representing the conservation of momentum and mass. It can model and anticipate diverse fluid flow issues, encompassing the movement of gases and liquids, turbulent processes, and the efficacy of wind turbines. In recent years, various numerical methods have been utilized to solve the Navier-Stokes equations, yielding quantitative results for fluid flow, which serve as a foundation for optimizing designs and informing engineering decisions, including aircraft design, hydraulic engineering planning, and weather forecasting.

In this study we examine the reconstruction of incompressible fluid flows determined by the two-dimensional Navier-Stokes equations. The equation are expressed in velocity-pressure form as
\begin{equation}
\label{eq:NS}
\begin{aligned}
u_t + u u_x + v u_y &= -p_x + \frac{1}{Re}(u_{xx}+u_{yy}),\\
v_t + u v_x + v v_y &= -p_y + \frac{1}{Re}(v_{xx}+v_{yy}),\\
u_x + v_y &= 0,
\end{aligned}
\end{equation}

Here, $u(x,y,t)$ and $v(x,y,t)$ denote the velocity components, $p(x,y,t)$ represents the pressure, and $Re=100$ signifies the Reynolds number. To ensure incompressibility, we utilize a stream function $\psi(x,y,t)$ such that
$$
u = \frac{\partial \psi}{\partial y}, \quad v = -\frac{\partial \psi}{\partial x},
$$
It guarantees that $\nabla \cdot \mathbf{u} = 0$ holds universally.  The vorticity field is subsequently derived for analysis as
$$
\omega = v_x - u_y.
$$

The learning problem is defined as reconstructing the fields $\{u,v,p,\psi,\omega\}$ from significantly sparse observations.  In practice, merely $1{,}500$ labeled velocity samples, representing $0.15\%$ of a reference dataset comprising $10^6$ points, are utilized for training, although pressure is never directly measured. This illustrates true situations in which obtaining dense fluid data is prohibitively costly. The limited supervision is enhanced by physical restrictions, guaranteeing that the reconstructed solutions align with the governing equations.

We propose a transformer-inspired physics-informed network, designated as \textit{PhysicsFormer}, to accomplish this objective. The model transforms the spatio-temporal inputs $(x,y,t)$ into a latent representation of dimension $d_{\text{model}}=32$, which is then processed by $N=1$ stacked encoder-decoder layer including $4$ attention heads. Each block comprises multi-head self-attention succeeded by feed-forward layers utilizing weighted \textit{Sine} activation functions.

$$
\phi(\mathrm{t}) = w \sin(\mathrm{t}),
$$
where $w$ is a parameter subject to training.  This selection improves the expressiveness of oscillatory dynamics characteristic of fluid flows.  The hidden layer dimension is configured to $128$, and the output head generates two channels $(\psi,p)$.  The velocity components are subsequently obtained through automatic differentiation of $\psi$.

This architecture guarantees that incompressibility is precisely fulfilled via the stream function formulation. Additionally, vorticity and pressure are simultaneously reconstructed with the velocity field, providing an accurate representation of the flow.  The total number of trainable parameters is carefully adjusted to ensure effective optimization and accuracy.

The model parameters are initialized using an appropriate weight initialization technique and trained with the Limited-memory Broyden–Fletcher–Goldfarb–Shanno (L-BFGS) \cite{liu1989limited} optimizer employing a strong Wolfe line search.  Throughout the training process, at each iteration, the model generates $(\psi,p)$, from which the velocity components $(u,v)$ are automatically derived.  Higher-order derivatives concerning $(x,y,t)$ are derived using automatic differentiation.

The residuals of the governing PDEs are then computed as
\begin{equation}
\label{eq:residuals}
\begin{aligned}
f_u &= u_t + (u u_x + v u_y) + p_x - \frac{1}{Re} (u_{xx}+u_{yy}), \\
f_v &= v_t + (u v_x + v v_y) + p_y - \frac{1}{Re} (v_{xx}+v_{yy}),
\end{aligned}
\end{equation}
which correspond to the $x$- and $y$-momentum equations. The total loss is defined as
\begin{equation}
\mathcal{L_{\text{PhysicsFormer}}} = 
\frac{1}{\mathcal{N}_u} \sum_{i=1}^{\mathcal{N}_u} 
\left( \hat{u}_i - u_i \right)^2 
+ \frac{1}{\mathcal{N}_v} \sum_{i=1}^{\mathcal{N}_v} 
\left( \hat{v}_i - v_i \right)^2 
+ \frac{1}{\mathcal{N}_f} \sum_{i=1}^{\mathcal{N}_f} 
\left( f_u(\mathbf{x}_i, t_i) \right)^2 
+ \frac{1}{\mathcal{N}_f} \sum_{i=1}^{\mathcal{N}_f} 
\left( f_v(\mathbf{x}_i, t_i) \right)^2 
\end{equation}

The initial two summations address the deviation between the estimated velocity components $(\hat{u}_i, \hat{v}_i)$ and their respective reference values $(u_i, v_i)$, namely data loss, normalized by the quantity of data points $\mathcal{N}_u$ and $\mathcal{N}_v$. The final two summations ensure compliance with the governing equations by imposing penalties on the residuals $f_u(\mathbf{x}_i, t_i)$ and $f_v(\mathbf{x}_i, t_i)$, which are averaged over $\mathcal{N}_f$ collocation points in both space and time.  The concept reconciles conformity with empirical evidence while concurrently integrating physical rules into the learning process. The vorticity $\omega$ is calculated as $\omega = v_x - u_y$, but it is excluded from the loss function and preserved for subsequent analysis and visualization in Figure~\ref{fig:cfd_vs_physicsformer_vorticity}.

The training persists for $1000$ epochs, and the model hyperparameters are presented in Table~\ref{tab:navierstokes_hyperparameter}. During each epoch, a closure function computes the loss, performs backpropagation of gradients, and modifies the model parameters. This joint optimization ensures precise reconstruction of sparse data while preserving dynamic consistency throughout the whole flow field. Thus, the framework can accurately reconstruct velocity, pressure, streamline, and vorticity fields from severely limited data.

To ensure a fair comparison with PINNsFormer~\cite{zhao2023pinnsformer, zhaopinnsformer}, PINNs~\cite{raissi2019physics}, QRes~\cite{bu2021quadratic}, and FLS~\cite{wong2022learning}, all models were trained using identical hyperparameters and the same dataset containing $2500$ samples. All computations were performed on NVIDIA L4 $24$ GB and Google Colab T4 $15$ GB GPUs, as summarized in Table~\ref{tab:comp_time}. The proposed \textit{PhysicsFormer} achieves improved pressure reconstruction and lower absolute error compared with the existing methods, as illustrated in Fig.~\ref{fig:comparison_pdf}. In addition, Tables~\ref{tab:comp_time} and~\ref{tab:time_memory} demonstrate that \textit{PhysicsFormer} requires substantially lower GPU memory and nearly half the computational time of PINNsFormer~\cite{zhao2023pinnsformer, zhaopinnsformer}. The proposed framework also attains lower rMAE and rRMSE values for the pressure field, as reported in Table~\ref{tab:navierstokes_results}, indicating improved prediction accuracy over PINNs~\cite{raissi2019physics}, QRes~\cite{bu2021quadratic}, FLS~\cite{wong2022learning}, and PINNsFormer~\cite{zhao2023pinnsformer}. Furthermore, even when trained for only $2000$ epochs, \textit{PhysicsFormer} exhibits lower training loss and more stable convergence behavior than the compared methods, as shown in Fig.~\ref{fig:training_loss}, demonstrating the robustness and computational efficiency of the proposed framework.

To evaluate the precision of the proposed physics-informed neural network model, we utilize relative Mean Absolute Error (rMAE) and relative Root Mean Square Error (rRMSE) as assessment metrics, the comparison of which is presented in Table~\ref{tab:navierstokes_results}. The specific formulations are delineated as follows:

\begin{equation}
    \text{rMAE} = 
    \frac{\sum\limits_{n=1}^{\mathcal{N}} \left| \hat{\boldsymbol{u}}(x_n,\mathrm{t}_n) - \boldsymbol{u}(x_n,\mathrm{t}_n) \right|} 
         {\sum\limits_{n=1}^{\mathcal{N}_{\text{res}}} \left| \boldsymbol{u(}x_n,\mathrm{t}_n) \right|}
    \label{eq:rmae}
\end{equation}

\begin{equation}
\mathrm{rRMSE} \;=\;
\sqrt{
\frac{\displaystyle \sum_{n=1}^{\mathcal{N}} \bigl|\hat{\boldsymbol{u}}(x_n,\mathrm{t}_n)-\boldsymbol{u}(x_n,\mathrm{t}_n)\bigr|^{2}}
     {\displaystyle \sum_{n=1}^{\mathcal{N}} \bigl|\boldsymbol{u}(x_n,\mathrm{t}_n)\bigr|^{2}}
}\ 
\label{eq:rrmse_mod}
\end{equation}

where $\mathcal{N}$ represents the total number of testing points, $\hat{\boldsymbol{u}}$ denotes the neural network approximation, and $\boldsymbol{u}$ corresponds to the ground truth solution. Here, $\mathcal{N}_{\text{res}}$ indicates the number of residual collocation points employed in the training process.

To improve model efficacy, we train our proposed \textit{PhysicsFormer} with $0.15\%$, or $1500$, velocity, and our model accurately reconstructs the velocity, pressure, vorticity, and streamline in the wake region.   Figures~\ref{fig:uvp_vs_physicsformer}, \ref{fig:cfd_vs_physicsformer_vorticity}, and \ref{fig:streamline_comparison} illustrate a comparison between our findings and experimental data; our model correctly reconstructs the flow field. For training purposes Figure~\ref{fig:physicsformer_supervised}, we show velocity data from $t=0.00s$ to $t=19.90s$ with $0.1s$ time slices, while for validation, we reconstruct unseen velocity data and the pressure field at $t=20s$.

\begin{table}[htbp]
\centering
\caption{Schematic representation of the proposed \textit{PhysicsFormer} framework for flow reconstruction of the two-dimensional incompressible Navier--Stokes equations at $Re=100$.}
\begin{tabular}{ll}
\hline
Hyperparameter & Value \\
\hline
$d_{\text{out}}$ & 2 \\
$d_{\text{hidden}}$ & 128 \\
$d_{\text{model}}$ & 32 \\
$N$ (Layers) & 1 \\
Heads & 4 \\
$d_{ff}$ & 256 \\
Activation & Weighted Sine $\; \phi(t) = w \sin(t)$ \\
Epochs & 1000 \\
Optimizer & L-BFGS \\
\hline
\end{tabular}
\label{tab:navierstokes_hyperparameter}
\end{table}

\begin{table}[h!]
\centering
\caption{Comparison of flow reconstruction performance at $Re=100$ using Loss, rMAE, and rRMSE metrics.}
\begin{tabular}{lccc}
\hline
\textbf{Model} & \textbf{Loss} & \textbf{rMAE} & \textbf{rRMSE} \\
\hline
PINNs         & 6.72e$^{-5}$ & 13.08  & 9.08 \\
QRes           & 2.24e$^{-4}$ & 6.41   & 4.45 \\
FLS          & 9.54e$^{-6}$ & 3.98   & 2.77 \\
PINNsFormer \cite{zhao2023pinnsformer, zhaopinnsformer}   & 6.66e$^{-6}$ & 0.384 & 0.280 \\
\textbf{Proposed Agorithm} & \textbf{5.35e$^{-6}$} & \textbf{0.136} & \textbf{0.133} \\ 
\hline
\end{tabular}
\label{tab:navierstokes_results}
\end{table}

\begin{table}[htbp]
\centering
\caption{Comparison of computational cost and model complexity for flow reconstruction at $Re=100$ with fixed $k=5$ and $\Delta t = 1\times10^{-2}$.}
\label{tab:comp_time}
\begin{tabular}{lccc}
\hline
\textbf{Model} & \textbf{Total Computational Time} & \textbf{GPU Card Details} & \textbf{Model Parameters} \\
\hline
PINNsFormer \cite{zhao2023pinnsformer, zhaopinnsformer}   & 184 min & L4, 24GB & 454,106 \\
\textbf{Proposed Algorithm} & \textbf{60 min} & L4, 24GB & \textbf{194510} \\
\hline
\end{tabular}
\end{table}

\refstepcounter{figure}

\begin{center}
    \includegraphics[width=0.6\textwidth]{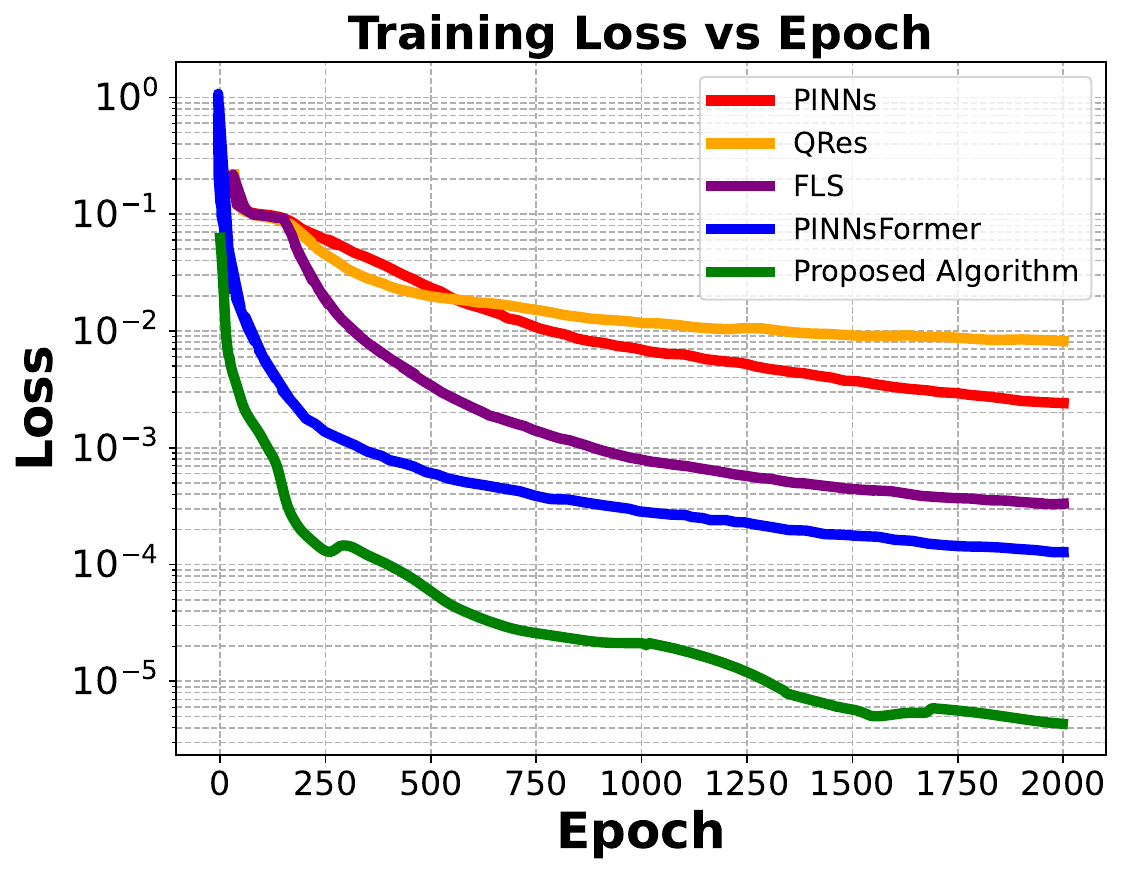}
\end{center}

\noindent\textbf{Figure \thefigure.}
Comparison of training loss convergence for \textit{PhysicsFormer}, PINNsFormer~\cite{zhao2023pinnsformer, zhaopinnsformer}, PINNs~\cite{raissi2019physics}, QRes~\cite{bu2021quadratic}, and FLS~\cite{wong2022learning} in the flow reconstruction problem at $Re=100$.

\label{fig:training_loss}

\refstepcounter{figure}

\begin{center}
    \includegraphics[width=\textwidth]{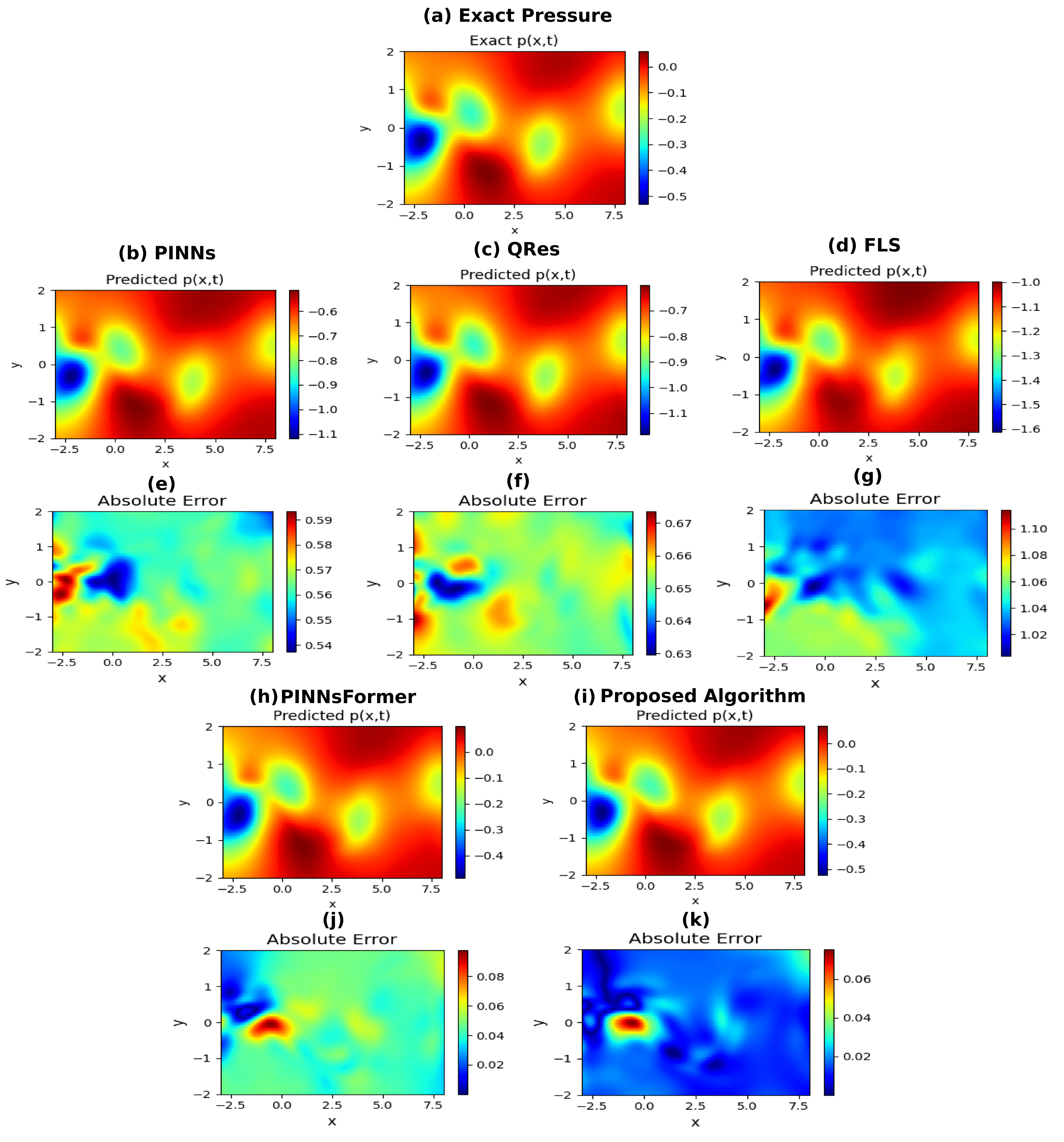}
\end{center}

\begin{quote}
\noindent\textbf{Figure \thefigure.}
Pressure field reconstruction for flow past a circular cylinder at $Re=100$. Panel (a) shows the exact pressure field~\cite{raissi2019physics}; panels (b), (c), (d), (h), and (i) present the predicted pressure fields obtained using PINNs~\cite{raissi2019physics}, QRes~\cite{bu2021quadratic}, FLS~\cite{wong2022learning}, PINNsFormer~\cite{zhao2023pinnsformer, zhaopinnsformer}, and the proposed \textit{PhysicsFormer}, respectively; while panels (e), (f), (g), (j), and (k) show the corresponding absolute errors.
\end{quote}

\label{fig:comparison_pdf}

\refstepcounter{figure}

\begin{center}
    \includegraphics[width=0.9\textwidth]{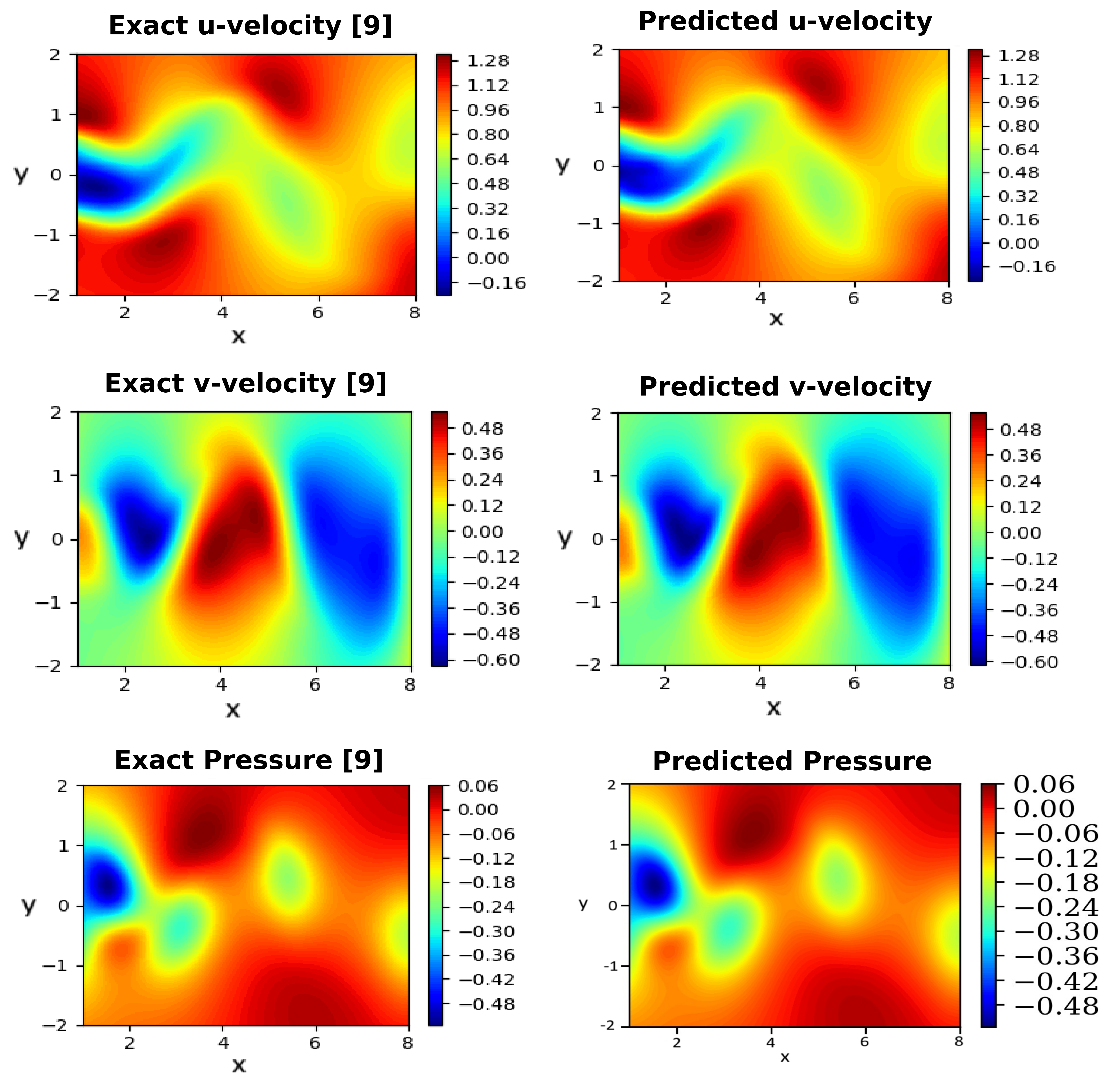}
\end{center}

\noindent\textbf{Figure \thefigure.}
Comparison of the reference solution~\cite{raissi2019physics} and the proposed \textit{PhysicsFormer} predictions for the $u$-velocity, $v$-velocity, and pressure fields at $Re=100$. The reconstruction uses only $0.15\%$ ($1500$) sparse supervised data, while accurately capturing the essential flow dynamics.

\label{fig:uvp_vs_physicsformer}

\refstepcounter{figure}

\begin{center}
    \includegraphics[width=0.8\textwidth]{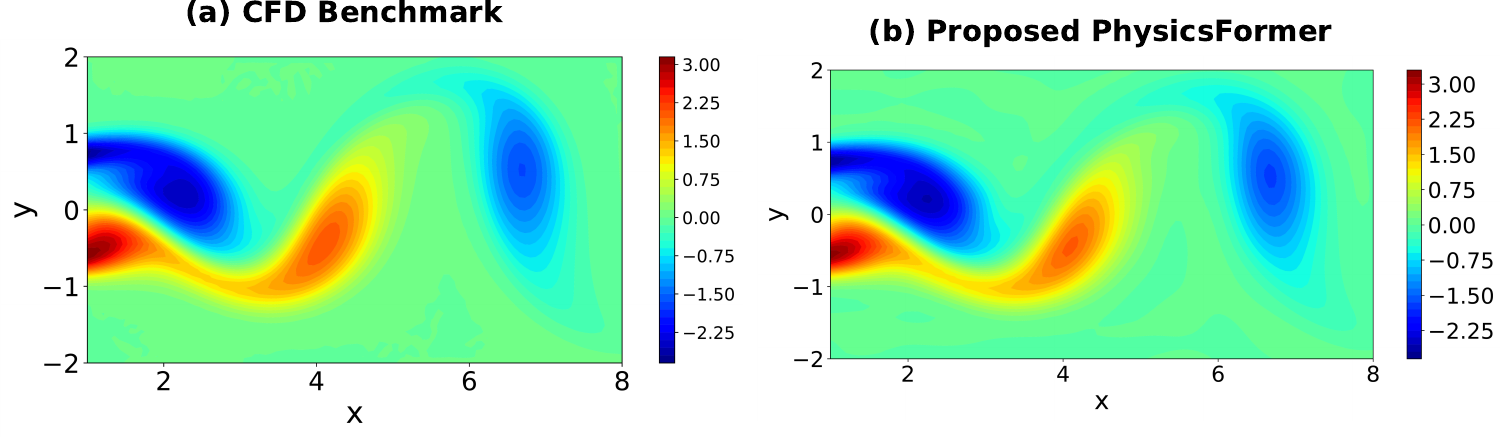}
\end{center}

\noindent\textbf{Figure \thefigure.}
Comparison of vorticity reconstruction in the wake of a circular cylinder at $Re=100$: (a) CFD benchmark solution~\cite{raissi2019physics} and (b) prediction from the proposed \textit{PhysicsFormer}. Using only $0.15\%$ ($1500$ samples) sparse data, the model accurately captures the wake vortex structures.

\label{fig:cfd_vs_physicsformer_vorticity}

\refstepcounter{figure}

\begin{center}
    \includegraphics[width=0.8\linewidth]{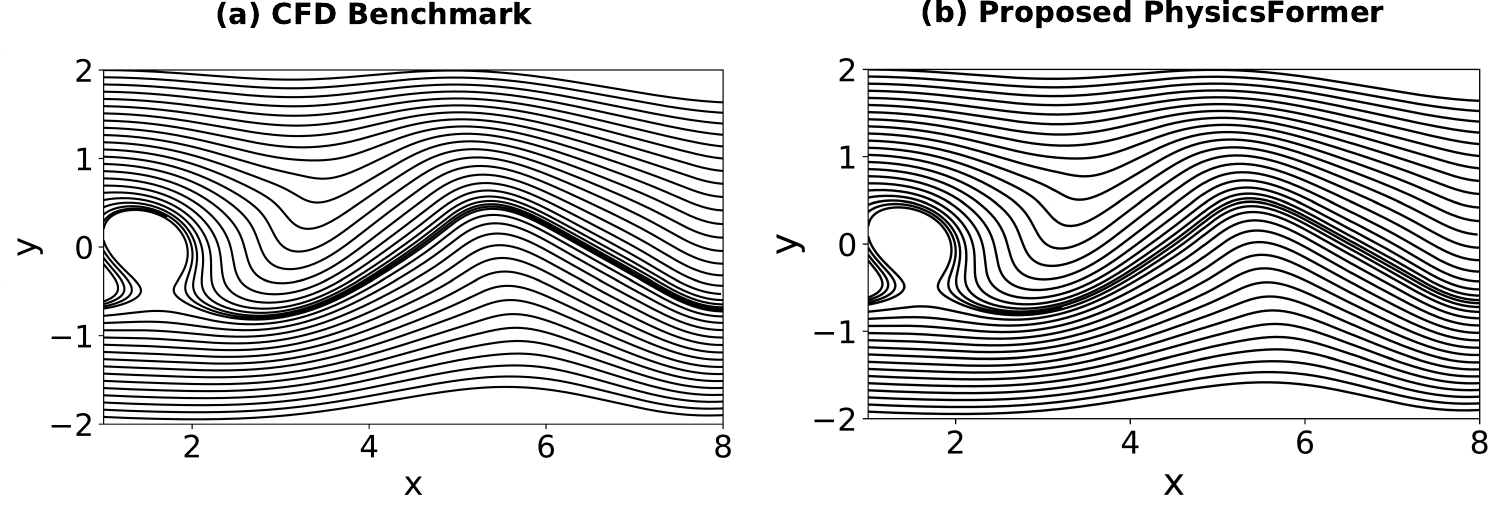}
\end{center}

\noindent\textbf{Figure \thefigure.}
Comparison of streamline reconstruction in the wake of a circular cylinder at $Re=100$: the left column shows the CFD benchmark solution~\cite{raissi2019physics}, while the right column presents the proposed \textit{PhysicsFormer} prediction. Using only $0.15\%$ ($1500$ samples) sparse velocity data, the model accurately captures the dominant wake flow structures.

\label{fig:streamline_comparison}

\begin{table}[htbp]
    \centering
    \caption{Comparison of training time and GPU memory consumption for Navier--Stokes flow reconstruction at $Re=100$ using an NVIDIA L4 GPU ($k=5$, $\Delta t = 1\times10^{-2}$).}
    \label{tab:time_memory}
    \begin{tabular}{lcc}
        \hline
        \textbf{Model Description} & \textbf{Duration of Training (seconds per epoch)} & \textbf{GPU Memory (MiB)} \\
        \hline
        PINNsFormer \cite{zhao2023pinnsformer}          & 11.04 & 2827 \\
        \textbf{Proposed Algorithm} & \textbf{3.60} & \textbf{640.5} \\
        \hline
    \end{tabular}
\end{table}

\clearpage
\subsection{Inverse Problem (Identified parameter and Equation)}
This work examines the inverse modeling of incompressible fluid flows governed by the two-dimensional Navier–Stokes equations. These equations are fundamental in characterizing numerous physical processes in research and engineering, including atmospheric circulation, ocean currents, aerodynamic design, hemodynamics, and pollution transport.  In two spatial dimensions, the Navier–Stokes equations can be expressed as
\begin{align}
u_t+\lambda_1\left(u u_x+v u_y\right) &= -p_x+\lambda_2\left(u_{x x}+u_{y y}\right), \label{eq:NS1}\\
v_t+\lambda_1\left(u v_x+v v_y\right) &= -p_y+\lambda_2\left(v_{x x}+v_{y y}\right), \label{eq:NS2}
\end{align}
where $u(t,x,y)$ and $v(t,x,y)$ denote the velocity components in the $x$- and $y$-directions, respectively, while $p(t,x,y)$ represents the pressure. The unknown physical parameters are given by $\lambda=(\lambda_1,\lambda_2)$, where $\lambda_1$ corresponds to the convection coefficient and $\lambda_2$ to the diffusion or viscosity coefficient.  

To ensure incompressibility, the velocity field must satisfy the continuity constraint
\begin{equation}
u_x + v_y = 0 .
\label{eq:continuity}
\end{equation}
This condition is enforced automatically by introducing a latent stream function $\psi(t,x,y)$ such that
\begin{equation}
u = \psi_y, \quad v = -\psi_x .
\end{equation}
Under this formulation, the divergence-free condition is satisfied by construction. Given noisy measurements 
$$
\{t^i, x^i, y^i, u^i, v^i\}_{i=1}^N ,
$$

Our aim is to jointly recover the latent stream function $\psi$, the pressure field $p$, vorticity, and the unknown parameters $\lambda=(\lambda_1,\lambda_2)$.

We provide \emph{PhysicsFormer}, a transformer-based model informed by physics, designed for spatio-temporal partial differential equations, as an alternative to traditional physics-informed neural networks (PINNs). The architecture is designed to capture periodic dynamics, temporal correlations, and long-range dependencies that are commonly seen in nonlinear fluid flows.

\subsubsection*{Architecture overview}
\begin{itemize}
    \item \textbf{Input encoding:} 
    The spatial-temporal inputs $[x,y,t]$ are transformed into a latent representation of dimension $d_{\text{model}}=128$ by a linear projection.
    
    \item \textbf{Transformer Architecture:} 
    The network utilizes an encoder-decoder architecture of $N=2$ layers and $4$ self-attention heads.  The attention mechanism facilitates the concurrent modeling of local flow interactions and global temporal dependencies.

    \item \textbf{Temporal learning:} 
    The data are transformed into sequences of length $\text{num\_step}=2$ with $\Delta t=10^{-4}$, enabling the model to approximate the temporal dynamics of the PDE solution.
    
    \item \textbf{Weighted sine activation:} 
    Each feedforward module employs a parameterized sine activation defined as $\phi(\mathrm{t})=w\sin(\mathrm{t})$, which improves the model's capacity to catch oscillatory patterns in contrast to conventional ReLU or Tanh functions.
    
    \item \textbf{Outputs:} 
    The decoder forecasts two scalar fields, $[\psi(t,x,y),p(t,x,y)]$, from which the velocity components $(u,v)$ can be derived.
      
    \item \textbf{Learnable PDE parameters:} 
    The coefficients $\lambda_1$ and $\lambda_2$ are regarded as trainable variables, facilitating inverse parameter identification in conjunction with field reconstruction.  
\end{itemize}

\refstepcounter{figure}

\begin{center}
    \includegraphics[width=0.9\textwidth,page=1]{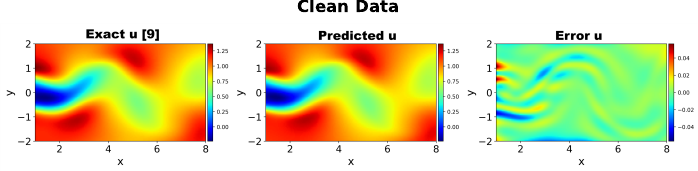}\\[0.2cm]
    \includegraphics[width=0.9\textwidth,page=2]{PhysicsFormer_inverse_NS_equation.pdf}\\[0.2cm]
    \includegraphics[width=0.9\textwidth,page=3]{PhysicsFormer_inverse_NS_equation.pdf}\\[0.2cm]
    \includegraphics[width=0.9\textwidth,page=4]{PhysicsFormer_inverse_NS_equation.pdf}
\end{center}

\noindent\textbf{Figure \thefigure.}
Inverse Navier--Stokes reconstruction at $Re=100$ using $1500$ supervised velocity samples. The top two rows show the exact~\cite{raissi2019physics} and \textit{PhysicsFormer} predictions for the $u$- and $v$-velocity fields with corresponding absolute errors for clean data, while the bottom two rows present the results under $1\%$ Gaussian noise.

\label{fig:velocity_comparison}

\refstepcounter{figure}

\begin{center}
    \includegraphics[width=0.9\linewidth]{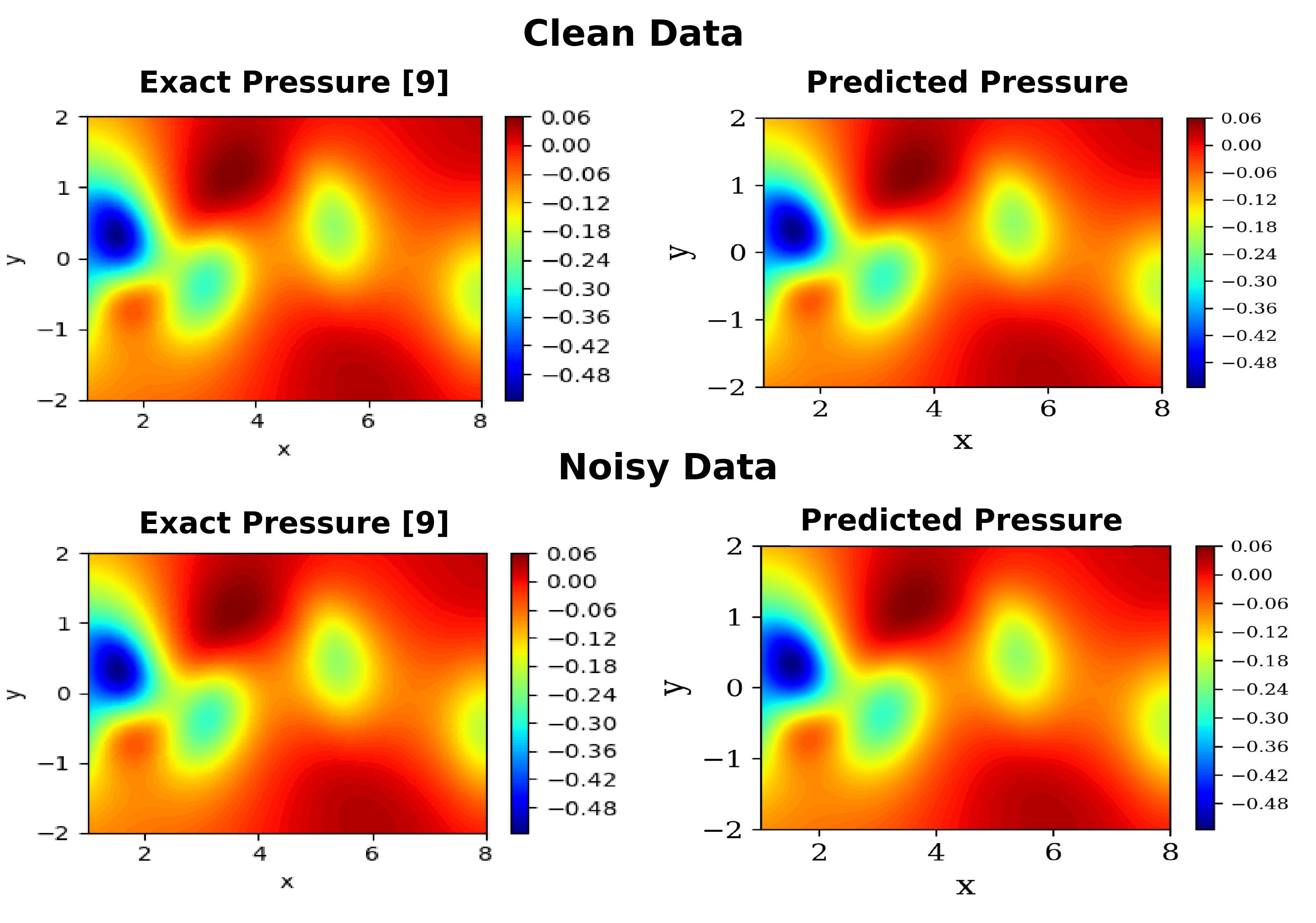}
\end{center}

\noindent\textbf{Figure \thefigure.}
Inverse Navier--Stokes pressure reconstruction at $Re=100$ using $1500$ supervised velocity samples. The top row compares the exact pressure field~\cite{raissi2019physics} and the \textit{PhysicsFormer} prediction for clean data, while the bottom row presents the corresponding results under $1\%$ Gaussian noise.\\

\label{fig:pressure_comparison}










To integrate the governing equations into the training process, \textit{PhysicsFormer} calculates the residuals associated with the momentum equations:

\begin{equation}
\begin{aligned}
f_u &:= u_t+\lambda_1\left(u u_x+v u_y\right)+p_x-\lambda_2\left(u_{x x}+u_{y y}\right), \\
f_v &:= v_t+\lambda_1\left(u v_x+v v_y\right)+p_y-\lambda_2\left(v_{x x}+v_{y y}\right).
\end{aligned}
\label{eq:fg_equations}
\end{equation}

The network is therefore assigned the twin objective of concurrently approximating the velocity fields $(u,v)$ via the stream function $\psi$ and the physics residuals $(f_u,f_v)$.

The comprehensive loss function is designed to ensure both conformity to the observed data and compliance with physical rules. The mean-squared error (MSE) loss is explicitly stated as

\begin{equation}
MSE := \frac{1}{\mathcal{N}} \sum_{i=1}^\mathcal{N}
\Big( |u(t^i,x^i,y^i) - u^i|^2 + |v(t^i,x^i,y^i) - v^i|^2 \Big)
+ \frac{1}{\mathcal{N}} \sum_{i=1}^\mathcal{N}
\Big( |f_u(t^i,x^i,y^i)|^2 + |f_v(t^i,x^i,y^i)|^2 \Big)
\end{equation}

The training dataset comprises $\mathcal{N}_{\text{train}}=1500$ spatio-temporal samples. Two scenarios are examined: one including noise-free data and the other incorporating $1\%$ Gaussian noise in the velocity measurements.  The network, with a hidden dimension of $d_{\text{hidden}}=256$, comprises several million parameters while maintaining memory efficiency owing to the brief sequence duration. Training occurs on CUDA-capable GPUs, with memory utilization assessed by monitoring allocated and reserved GPU resources. The architecture of this challenge is detailed in Table~\ref{tab:physicsformer_hyperparams}.

To assess the effectiveness of the proposed \textit{PhysicsFormer}, we compare our results with the reference data presented in Figures~\ref{fig:velocity_comparison} and \ref{fig:pressure_comparison}. The predicted velocity and pressure fields show strong agreement with the reference solutions for clean data, and the model remains robust even in the presence of $1\%$ Gaussian noise, with predictions closely matching the exact fields. The vortex-shedding dynamics in the wake region are further analyzed through the evolution of the vorticity field over a complete cycle. The shedding process originates at the initial time $t_0$, undergoes phase reversal at $t_0 + T/2$, and returns to an equivalent configuration at $t_0 + T$, where $T$ denotes the shedding period, as illustrated in Figure~\ref{fig:vorticity_cycle}. This behavior confirms that the proposed framework successfully captures the periodic vortex dynamics with high fidelity. Overall, these results indicate that \textit{PhysicsFormer} is capable of accurately reconstructing complex flow structures from sparse velocity measurements. In addition, we validate the learned parameters $\lambda_1$ and $\lambda_2$ against baseline approaches such as PINNs~\cite{raissi2019physics} and Res-PINNs~\cite{cheng2021deep}. The corresponding identification results are reported in Table~\ref{tab:lambda_comparison}, while the equation discovery performance is summarized in Table~\ref{tab:ns_equations}, both demonstrating the effectiveness of the proposed approach.

Figure~\ref{fig:lambda_convergence} illustrates the identification parameters $\lambda_1$ and $\lambda_2$, detailing their convergence during the training process. By the last training epoch, $5000$, both parameters closely correspond with the true value.

\refstepcounter{figure}

\begin{center}
    \includegraphics[width=1.0\linewidth]{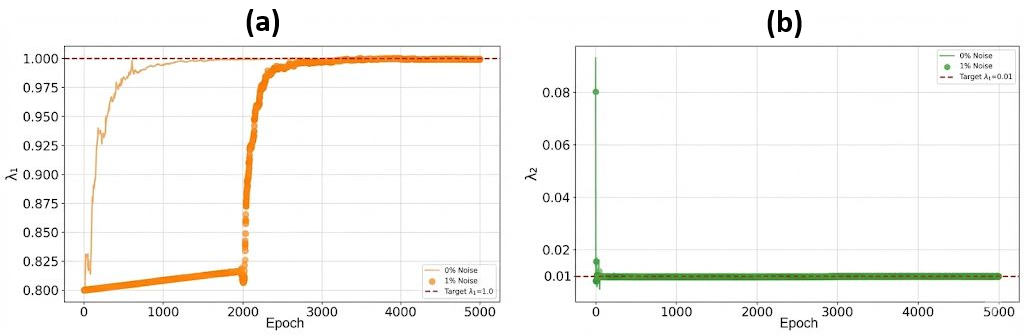}
\end{center}

\noindent\textbf{Figure \thefigure.}
Convergence histories of the identified parameters $\lambda_1$ and $\lambda_2$ for the inverse Navier--Stokes problem at $Re=100$ over 5000 epochs under clean and noisy data conditions. Panels (a) and (b) show the convergence of $\lambda_1$ and $\lambda_2$, respectively.

\label{fig:lambda_convergence}

\refstepcounter{figure}

\begin{center}
    \includegraphics[width=1.0\textwidth]{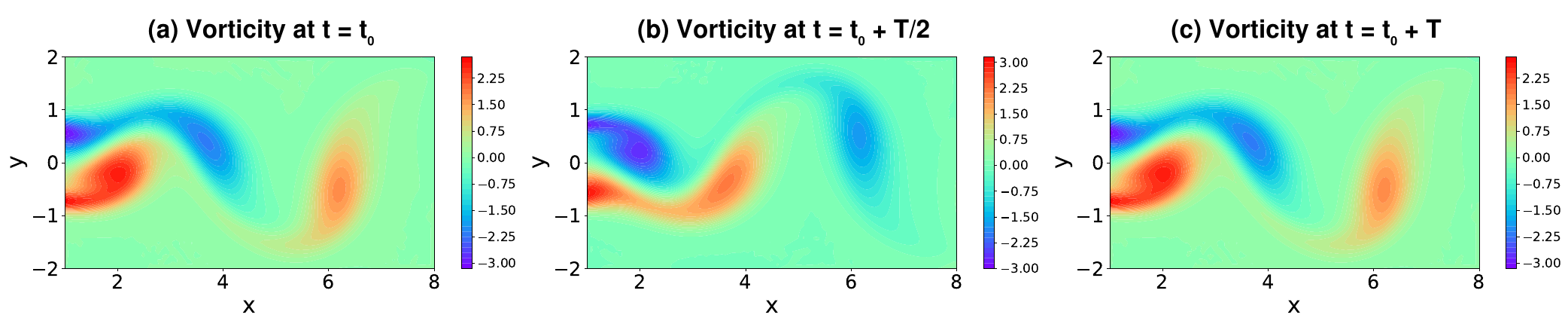}
\end{center}

\noindent\textbf{Figure \thefigure.}
Temporal evolution of vorticity fields for flow past a circular cylinder at $Re=100$ over one complete vortex-shedding cycle: (a) $\mathrm{t}=\mathrm{t}_0$, (b) $\mathrm{t}=\mathrm{t}_0 + T/2$, and (c) $\mathrm{t}=\mathrm{t}_0 + T$, where $T$ denotes the full shedding period.

\label{fig:vorticity_cycle}

\begin{table}[htbp]
\centering
\caption{Hyperparameter settings of \textit{PhysicsFormer} for the inverse Navier--Stokes problem at $Re=100$.}
\label{tab:physicsformer_hyperparams}
\begin{tabular}{ll}
\hline
\textbf{Component} & \textbf{Setting} \\
\hline
Input Encoding & $[x,y,t] \to d_{\text{model}}=128$ \\
Optimizer & L-BFGS \\
Epochs & $5000$ \\
Activation & Weighted \textit{Sine} \\
\hline
\multicolumn{2}{c}{\textbf{Transformer Architecture}} \\
\hline
Architecture & Encoder--decoder \\
Layers & $N=2$ \\
Attention Heads & $4$ \\
\hline
\multicolumn{2}{c}{\textbf{Temporal Sequence}} \\
\hline
Length & $\text{num\_step}=2$ \\
Interval & $\Delta t = 10^{-4}$ \\
\hline
\multicolumn{2}{c}{\textbf{Training Data}} \\
\hline
Samples & $N_{\text{train}}=1500$ \\
Noise & 0\% and 1\% Gaussian \\
\hline
\end{tabular}
\end{table}

\begin{table}[htbp]
\centering
\caption{Comparison of the identified parameters $\lambda_1$ and $\lambda_2$ for the inverse Navier--Stokes problem at $Re=100$ using PINNs~\cite{raissi2019physics}, Res-PINN~\cite{cheng2021deep}, and the proposed \textit{PhysicsFormer} under clean and $1\%$ noisy data conditions.}
\label{tab:lambda_comparison}
\begin{tabular}{lcccccccc}
\hline
\multirow{2}{*}{Model} & \multicolumn{4}{c}{Clean Data} & \multicolumn{4}{c}{1\% Noisy Data} \\
\cline{2-5} \cline{6-9}
 & $\lambda_1$ & Err(\%) & $\lambda_2$ & Err(\%) & $\lambda_1$ & Err(\%) & $\lambda_2$ & Err(\%) \\
\hline
PINNs         & 0.999 & 0.07  & 0.01047 & 4.67  & 0.998 & 0.17  & 0.01057 & 5.70  \\
Res-PINN \cite{cheng2021deep}      & 1.000 & 0.0 & 0.01006 & 0.61  & 1.000 & 0.0 & 0.01011 & 1.08  \\
\textbf{Proposed Algorithm} & \textbf{1.000} & \textbf{0.0} & \textbf{0.0100} & \textbf{0.0}  & 0.999 & 0.07  & \textbf{0.01000} & \textbf{0.0}  \\
\hline
\end{tabular}
\end{table}

\begin{table}[htbp]
\centering
\caption{Comparison between the exact and identified Navier--Stokes equations for the inverse problem at $Re=100$ using the proposed \textit{PhysicsFormer}.}
\label{tab:ns_equations}
\renewcommand{\arraystretch}{1.3} 
\begin{tabular}{p{0.42\textwidth} p{0.52\textwidth}}
\hline
\textbf{Corrected Navier-Stokes equation} & 
$\begin{aligned}
u_t + (uu_x + vu_y) &= -p_x + 0.01(u_{xx} + u_{yy}), \\
v_t + (uv_x + vv_y) &= -p_y + 0.01(v_{xx} + v_{yy}),
\end{aligned}$ \\
\hline

Identified Navier-Stokes equation using clean data (PINN)  &
$\begin{aligned}
u_t + 0.999(uu_x + vu_y) &= -p_x + 0.01047(u_{xx} + u_{yy}), \\
v_t + 0.999(uv_x + vv_y) &= -p_y + 0.01047(v_{xx} + v_{yy}),
\end{aligned}$ \\
\hline

Identified Navier-Stokes equation using clean data (Res-PINN) \cite{cheng2021deep}&
$\begin{aligned}
u_t + 1.000(uu_x + vu_y) &= -p_x + 0.01006(u_{xx} + u_{yy}), \\
v_t + 1.000(uv_x + vv_y) &= -p_y + 0.01006(v_{xx} + v_{yy}),
\end{aligned}$ \\
\hline

\textbf{Identified Navier-Stokes equation using clean data(Proposed Algorithm)} &
$\begin{aligned}
u_t + \textbf{1.000}(uu_x + vu_y) &= -p_x + \textbf{0.01000}(u_{xx} + u_{yy}), \\
v_t + \textbf{1.000}(uv_x + vv_y) &= -p_y + \textbf{0.01000}(v_{xx} + v_{yy}),
\end{aligned}$ \\
\hline


Identified Navier-Stokes equation using noise data (PINN) &
$\begin{aligned}
u_t + 0.998(uu_x + vu_y) &= -p_x + 0.01057(u_{xx} + u_{yy}), \\
v_t + 0.998(uv_x + vv_y) &= -p_y + 0.01057(v_{xx} + v_{yy}),
\end{aligned}$ \\
\hline

Identified Navier-Stokes equation using noise data (Res-PINN) \cite{cheng2021deep}&
$\begin{aligned}
u_t + 1.000(uu_x + vu_y) &= -p_x + 0.01011(u_{xx} + u_{yy}), \\
v_t + 1.000(uv_x + vv_y) &= -p_y + 0.01011(v_{xx} + v_{yy}),
\end{aligned}$ \\
\hline

\textbf{Identified Navier-Stokes equation using noise data (Proposed Algorithm)} &
$\begin{aligned}
u_t + 0.999(uu_x + vu_y) &= -p_x + \textbf{0.01000}(u_{xx} + u_{yy}), \\
v_t + 0.999(uv_x + vv_y) &= -p_y + \textbf{0.01000}(v_{xx} + v_{yy}),
\end{aligned}$ \\
\hline
\end{tabular}
\end{table}

\clearpage
\subsection{Flow Reconstruction for Flow Past a Circular Cylinder at $Re=3900$}

To further evaluate the capability of the proposed PhysicsFormer framework for complex unsteady flows, we consider the flow reconstruction problem for the wake flow past a two-dimensional circular cylinder at Reynolds number $Re=3900$. This benchmark problem is characterized by highly nonlinear vortex shedding dynamics and has been widely used to assess reduced-order and data-driven flow modeling approaches. The flow field is generated using the Reynolds-averaged Navier--Stokes (RANS) equations with the $k$--$\omega$~\cite{yan2023exploring} turbulence model.

The governing incompressible Navier--Stokes equations are expressed as
\begin{align}
\frac{\partial u}{\partial x} + \frac{\partial v}{\partial y} &= 0, \\
\frac{\partial u}{\partial t}
+ u\frac{\partial u}{\partial x}
+ v\frac{\partial u}{\partial y}
+ \frac{\partial p}{\partial x}
- \frac{1}{Re}
\left(
\frac{\partial^2 u}{\partial x^2}
+
\frac{\partial^2 u}{\partial y^2}
\right)
&= 0, \\
\frac{\partial v}{\partial t}
+ u\frac{\partial v}{\partial x}
+ v\frac{\partial v}{\partial y}
+ \frac{\partial p}{\partial y}
- \frac{1}{Re}
\left(
\frac{\partial^2 v}{\partial x^2}
+
\frac{\partial^2 v}{\partial y^2}
\right)
&= 0,
\end{align}
where $u$ and $v$ denote the velocity components in the streamwise and transverse directions, respectively, while $p$ represents the pressure field.

For training, sparse flow measurements consisting of $(u,v,p)$ are collected from only 25 spatially distributed sensor locations over a time interval of $42.9\,\mathrm{s}$. The dataset contains 100 temporal snapshots, resulting in a total of 2500 labeled data points. Despite the extremely limited observations, the proposed PhysicsFormer successfully reconstructs the complete velocity and pressure fields with high accuracy.

The model employs an encoder--decoder transformer architecture with two layers and four attention heads. The spatial-temporal coordinates $(x,y,t)$ are embedded into a latent representation of dimension $d_{\mathrm{model}}=512$. A weighted sine activation function is adopted to improve the representation of complex flow dynamics. The network is trained using the L-BFGS optimizer for 5000 epochs. For temporal sequence modeling, a sequence length of $\text{num\_step}=2$ with pseudo-time interval $\Delta t = 10^{-4}$ is used.

The flow reconstruction capability of \textit{PhysicsFormer} is investigated for the flow past a circular cylinder at $Re=3900$ using only $25$ supervised spatial measurements across $100$ temporal snapshots. The reconstructed $u$-velocity, $v$-velocity, and pressure fields are evaluated at two representative time instants: an early-time snapshot at $t=0.43\,\mathrm{s}$ (Figure.~\ref{fig:re3900_uvp_results}) and a long-time snapshot at $t=17.78\,\mathrm{s}$ (Figure.~\ref{fig:re3900_flow_reconstruction}). In both cases, the reconstructed flow fields show strong agreement with the reference solutions, demonstrating that \textit{PhysicsFormer} effectively captures complex wake dynamics and long-range temporal dependencies from highly sparse observations.

Quantitatively, PhysicsFormer attains relative errors of $4.8\times10^{-3}$, $1.6\times10^{-2}$, and $2.0\times10^{-2}$ for the streamwise velocity, transverse velocity, and pressure fields, respectively. These values are notably lower than those reported by the STPINNs-RANS framework~\cite{xu2023spatiotemporal}, which yields errors of $1.8\times10^{-2}$, $3.3\times10^{-2}$, and $4.0\times10^{-2}$ for the corresponding fields. As summarized in Table~\ref{tab:cylinder_reconstruction_comparison}, this comparison indicates that PhysicsFormer provides improved reconstruction accuracy while maintaining computational efficiency for high-Reynolds-number turbulent flow problems.

\begin{table}[htbp]
\centering
\caption{Comparison of relative RMSE(rRMSE) for flow reconstruction of the flow past a circular cylinder at $Re=3900$ between STPINNs-RANS~\cite{xu2023spatiotemporal} and the proposed PhysicsFormer framework.}
\label{tab:cylinder_reconstruction_comparison}
\begin{tabular}{lccc}
\hline
\textbf{Model} & $\mathbf{u}$-velocity & $\mathbf{v}$-velocity & \textbf{Pressure} \\
\hline
STPINNs-RANS~\cite{xu2023spatiotemporal} & 0.018 & 0.033 & 0.040 \\
\textbf{PhysicsFormer} & \textbf{0.0048} & \textbf{0.016} & \textbf{0.020} \\
\hline
\end{tabular}
\end{table}

\refstepcounter{figure}

\begin{center}
    \includegraphics[width=0.9\textwidth]{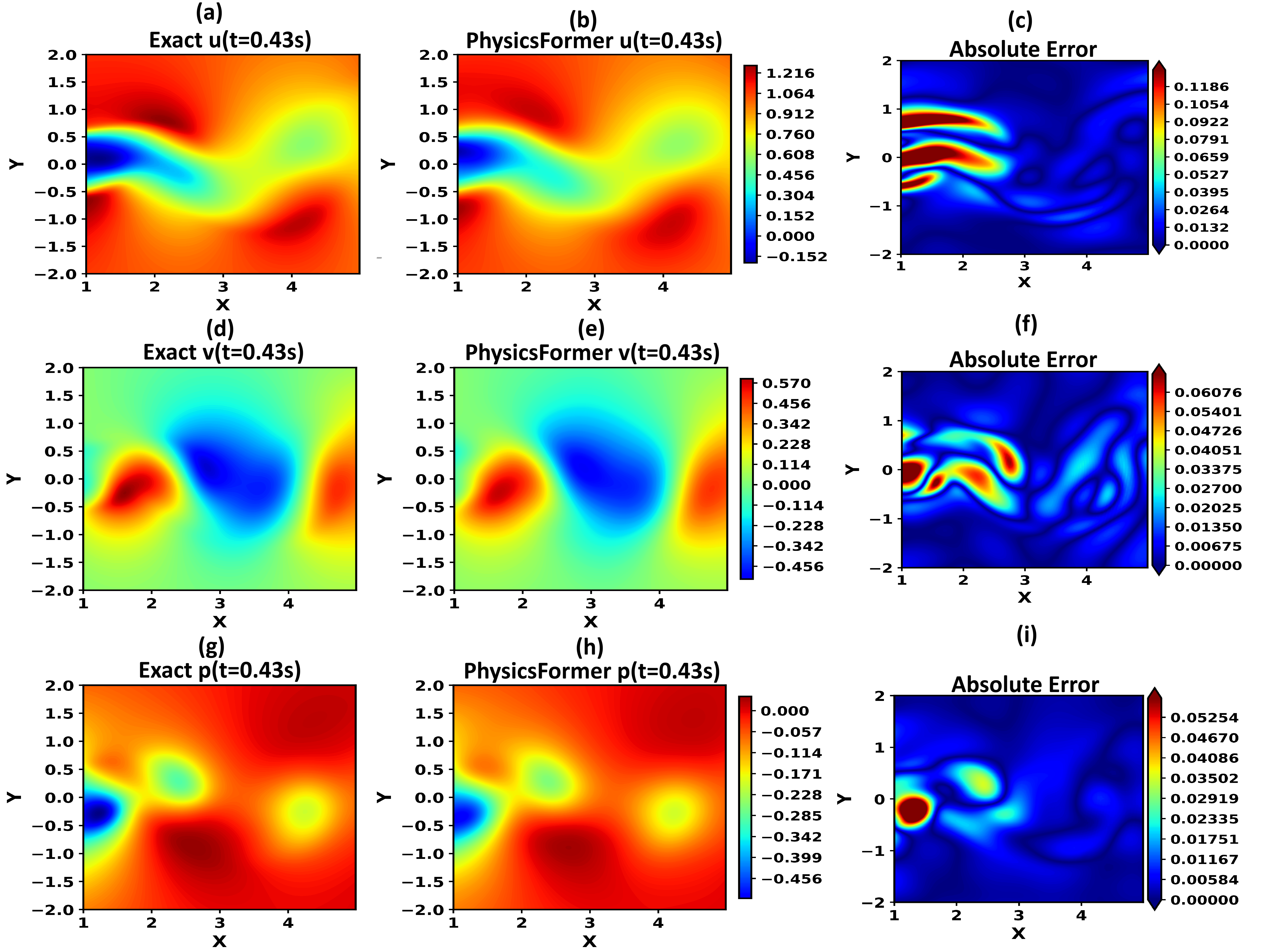}
\end{center}

\noindent\textbf{Figure \thefigure.}
Flow reconstruction results for flow past a circular cylinder at $Re=3900$ at $t=0.43\,\mathrm{s}$. Panels (a), (d), and (g) show the exact $u$-velocity, $v$-velocity, and pressure fields; panels (b), (e), and (h) present the corresponding \textit{PhysicsFormer} predictions; and panels (c), (f), and (i) illustrate the associated absolute errors.

\label{fig:re3900_uvp_results}

\refstepcounter{figure}

\begin{center}
    \includegraphics[width=0.9\textwidth]{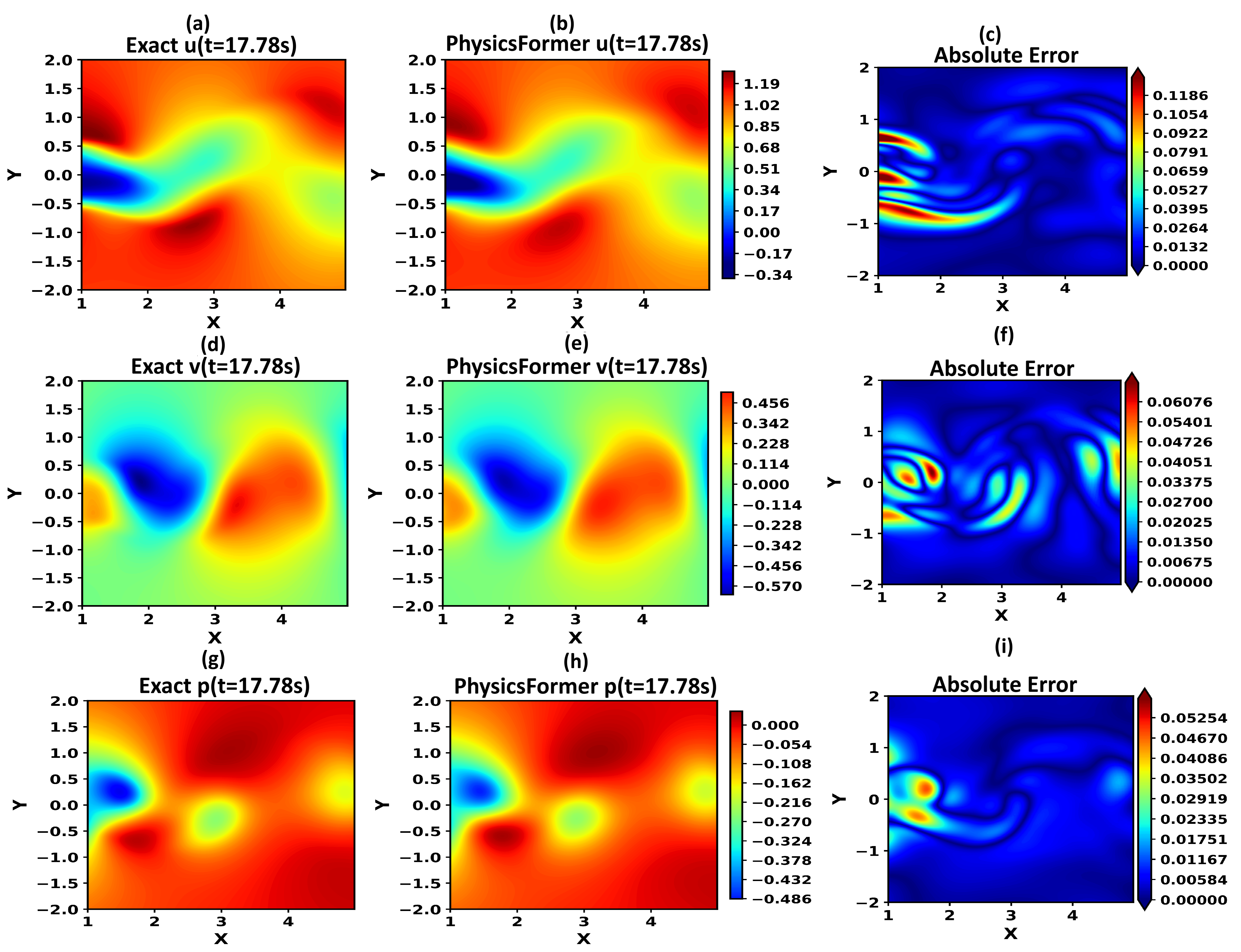}
\end{center}

\noindent\textbf{Figure \thefigure.}
Flow reconstruction results for flow past a circular cylinder at $Re=3900$ at $t=17.78\,\mathrm{s}$. Panels (a), (d), and (g) show the exact $u$-velocity, $v$-velocity, and pressure fields, respectively; panels (b), (e), and (h) present the corresponding \textit{PhysicsFormer} predictions; and panels (c), (f), and (i) illustrate the associated absolute errors.

\label{fig:re3900_flow_reconstruction}

\clearpage
\renewcommand{\thesection}{\arabic{section}}
\setcounter{section}{5}
\section{Conclusion}
\label{sec:conclusion}

This work presents \textit{PhysicsFormer}, a lightweight and efficient Transformer-based physics-informed neural network for surrogate modeling of nonlinear fluid dynamics and inverse parameter identification. The study first investigates the failure modes of conventional PINNs using a simple convection problem, highlighting their limitations in capturing temporally evolving dynamics. To address these challenges, the proposed framework is systematically validated on a sequence of benchmark problems, including the Burgers' equation, lid-driven cavity flow at $Re=100$, flow reconstruction, and inverse Navier--Stokes parameter identification for flow past a circular cylinder at $Re=100$, followed by high-Reynolds-number flow reconstruction at $Re=3900$. Across all forward reconstruction tasks, the proposed method accurately recovers velocity, pressure, vorticity, and streamline fields from sparse observations while preserving complex unsteady flow structures over long temporal horizons. In inverse problems, \textit{PhysicsFormer} reliably identifies the unknown Navier--Stokes parameters, achieving nearly $0\%$ absolute error in both clean and noisy data settings. The results demonstrate strong agreement with reference solutions across all test cases, including strongly nonlinear and high-Reynolds-number regimes. In addition, the proposed framework offers significant computational advantages over existing Transformer-based PINN models, providing faster training and efficient execution on standard GPU platforms. Overall, \textit{PhysicsFormer} provides a robust, accurate, and scalable approach for physics-informed deep learning in computational fluid dynamics, with strong potential for extension to more complex turbulent and multiphase flow systems.

\section*{Appendix A. Ablation investigation to determine \( k \) and \( \Delta t \) for the Burgers’ Equation}

Selecting k and $\Delta t$ is important in these research due to the sensitivity of both factors.  We consider two principal fluid flow equations: Burgers' equation and the flow reconstruction and inverse issue guided by the Navier-Stokes equation.  Our proposed \textit{PhysicsFormer} resembles a grid-based approach, where \( k \) represents the number of steps and \( \Delta t \) signifies the step size. If we increase the number of steps, our model occupies huge GPU memory and incurs significant computational overhead. Therefore, we standardize the number of steps to \( k=5 \) for all problems.  Conversely, if the step size is excessively tiny, the accuracy of the results may improve, although it requires tremendous processing cost in these analyses.  In the flow reconstruction problem, we set $\Delta t=1 \times 10^{-2}$, while in the inverse problem, we designate $\Delta t=1 \times 10^{-4}$.  In relation to the Burgers' equation, we examine the values $k=4, 5, \& 6$ and $\Delta t= 1 \times 10^{-3}, 1 \times 10^{-4} \& 1 \times 10^{-5}$ as presented in Table~\ref{tab:k_dt_burgers}.  We utilize several combinations while maintaining consistent hyperparameters across all instances Table~\ref{tab:physicformer_burger}. For the purpose of assessing model correctness, we evaluate the MAE, RMSE, relative RMSE, and Loss.  We observed that for \( k=5 \) and \( \Delta t=1 \times 10^{-4} \), there is significant agreement because of lower MAE, RMSE, relative RMSE and Loss; so, we keep these values in the equations. In higher-dimensional problems these combinations typically perform effectively; however, they require experimentation similar to other network hyperparameters. In the flow reconstruction problem, both $k=5$ and $\Delta t= 1 \times 10^{-2}$ and $1 \times 10^{-4}$ yielded comparable results.  Therefore, we set $k=5$ and $\Delta = 1 \times 10^{-2}$, whereas for the other problem, we maintain $k=5$ and $\Delta t = 1 \times 10^{-4}$.

\begin{table}[htbp]
\centering
\caption{Performance analysis of Burgers' equation for various combinations of \(k\) and \(\Delta t\), keeping all other hyperparameters constant.}
\begin{tabular}{cccccc}
\hline
\textbf{num\_step (\(k\))} & \textbf{step (\(\Delta t\))} & \textbf{MAE} & \textbf{RMSE} & \textbf{rRMSE} & \textbf{Loss} \\
\hline
4 & \(1\times10^{-3}\) & 0.002085 & 0.012855 & 0.021033 & 6.018963$\times10^{-6}$ \\
4 & \(1\times10^{-4}\) & 0.002096 & 0.013567 & 0.022198 & 6.264328$\times10^{-6}$ \\
4 & \(1\times10^{-5}\) & 0.002017 & 0.013711 & 0.022434 & 6.585098$\times10^{-6}$ \\
5 & \(1\times10^{-3}\) & 0.002408 & 0.016597 & 0.027157 & 9.927394$\times10^{-6}$ \\
\textbf{5} & \(\mathbf{1\times10^{-4}}\) & \textbf{0.001539} & \textbf{0.009144} & \textbf{0.014962} & \textbf{1.262837$\times10^{-6}$} \\
5 & \(1\times10^{-5}\) & 0.003234 & 0.019685 & 0.032208 & 2.904948$\times10^{-6}$ \\
6 & \(1\times10^{-3}\) & 0.002589 & 0.016487 & 0.026977 & 6.659313$\times10^{-6}$ \\
6 & \(1\times10^{-4}\) & 0.002709 & 0.017549 & 0.028714 & 6.215727$\times10^{-6}$ \\
6 & \(1\times10^{-5}\) & 0.001904 & 0.011310 & 0.018506 & 1.752262$\times10^{-5}$ \\
\hline
\end{tabular}
\label{tab:k_dt_burgers}
\end{table}

\noindent
The accuracy of the predicted solution \(\hat{u}(x,t)\) is evaluated using the following error metrics:

\begin{align}
MAE &= \frac{1}{\mathcal{N}} \sum_{i=1}^{\mathcal{N}} \left| u_{i}^{\text{pred}} - u_{i}^{\text{true}} \right|, \label{eq:L1}\\[6pt]
RMSE &= \sqrt{\frac{1}{\mathcal{N}} \sum_{i=1}^{\mathcal{N}} \left| u_{i}^{\text{pred}} - u_{i}^{\text{true}} \right|^{2}}, \label{eq:L2}\\[6pt]
\text{relative}~RMSE~(rRMSE)~ &= \frac{ \sqrt{ \sum_{i=1}^{\mathcal{N}} \left| u_{i}^{\text{pred}} - u_{i}^{\text{true}} \right|^{2}} }{ \sqrt{ \sum_{i=1}^{\mathcal{N}} \left| u_{i}^{\text{true}} \right|^{2}} }, \label{eq:RelL2}
\end{align}

\noindent
where \( u_{i}^{\text{pred}} \) and \( u_{i}^{\text{true}} \) denote the predicted and reference solution values at the \( i^{\text{th}} \) grid point, respectively, and \(\mathcal{N}\) represents the total number of data points used for evaluation.

\section*{Appendix B. Ablation Study of Different Activation Functions for the Flow Reconstruction Problem}
Four distinct activation functions are used in the activation function ablation study: Sine, Tanh, Wavelet(x) = $\omega_1sin(x) + \omega_2cos(x)$, and the weighted Sine activation function. We maintain a similar architecture throughout in Table 2:  The only significant changes are to the number of heads and hidden nodes ($d_{\text{hidden}}$), which are now 2 and 512, respectively.  We compare the number of parameters, total computational cost, rMAE, rRMSE, and Loss for each of these activation functions in order to ensure accuracy.  Our findings are included in Table~\ref{tab:flow_reconstruction_activation}, where we can observe that the weighted Sine gradually reduces loss, error, and computation time.  These will require a lot of GPU RAM if the Wavelet activation function \cite{zhao2023pinnsformer} is utilized. Our proposed simplest modification of these activation functions, weighted Sine, runs well on the Google Colab T-4, 15GB GPU.

\appendix
\section*{Appendix C: Effect of the Number of Attention Heads}

\noindent
The total number of attention heads is essential in influencing the model's distribution of representational capacity across various subspaces of the embedded features. In a multi-head attention mechanism, the total embedding dimension, referred to as \( d_{\text{model}} \), is distributed among the attention heads. The dimensionality associated with each head is so formulated as
$$
d_{\text{heads}} = \frac{d_{\text{model}}}{N_{\text{heads}}}.
$$
Each attention head autonomously acquires distinct spatial-temporal correlations or fundamental physical linkages inherent in the flow field.  Augmenting the number of heads allows the model to concentrate on more localized dependencies; however, this concurrently reduces the representational dimensionality accessible to each head.  When the embedding dimension \( d_{\text{model}} \) is rather small, an excessive number of heads may result in a poor representation of essential flow dynamics, thereby impacting the model's convergence and overall stability.

\medskip
\noindent
This study set the embedding dimension at \( d_{\text{model}} = 32 \) and changed the number of attention heads \( (N_{\text{heads}}) \) at 2, 4, and 8 to evaluate their impact on predictive performance, as shown in Table~\ref{tab:attention_heads}.  Among these designs, the arrangement with $4$ attention heads yielded the most favorable performance in terms of loss, relative mean absolute error (rMAE), relative root mean squared error (rRMSE), and computational efficiency.  If one utilizes these high attention heads, there is a possibility that the model will overfit. All models are executed on the NVIDIA L4 GPU, however they are also compatible with the Google Colab T4 GPU.

\medskip
\noindent
The relative error metrics are mathematically defined as follows:
$$
\mathrm{rMAE} = 
\frac{\displaystyle \sum_{i=1}^{\mathcal{N}} 
\left| u_{i}^{\text{pred}} - u_{i}^{\text{true}} \right|}
{\displaystyle \sum_{i=1}^{\mathcal{N}} 
\left| u_{i}^{\text{true}} \right|},
\qquad
\mathrm{rRMSE} \;=\;
\sqrt{
\frac{\displaystyle \sum_{n=1}^{\mathcal{N}} \left| u_{i}^{\text{pred}} - u_{i}^{\text{true}} \right|^{2}}
     {\displaystyle \sum_{n=1}^{\mathcal{N}} \left| u_{i}^{\text{true}} \right|^{2}}
}\ 
$$
where \( u_{i}^{\text{pred}} \) and \( u_{i}^{\text{true}} \) denote the predicted and reference solution values at the \( i^{\text{th}} \) grid point, respectively.

\begin{table}[h!]
\centering
\caption{Activation function ablation study for the Flow Reconstruction problem.}
\begin{tabular}{lcccccc}
\hline
\textbf{Activation} & \textbf{Loss} & \textbf{rMAE} & \textbf{rRMSE} & \textbf{Model Parameters} & \textbf{Time (min)} & \textbf{GPU Name} \\
\hline
Sine          & 0.000014 & 1.055 & 0.728 & 454082 & 70 & L4, 24GB\\
Tanh          & 0.000984 & 1.335 & 0.929 & 454082 & 160 & L4, 24GB \\
Wavelet       & 0.000008 & 0.404 & 0.293 & 454106 & 184 & 
L4, 24GB\\
Weighted Sine & \textbf{0.000006} & \textbf{0.176} & \textbf{0.137} & 454094 & \textbf{80} & L4, 24GB \\
\hline
\end{tabular}
\label{tab:flow_reconstruction_activation}
\end{table}

\begin{table}[htbp]
\centering
\caption{Performance comparison for varying number of attention heads in the Flow Reconstruction problem 
(\(k = 5\), \( \Delta t = 1\times10^{-2} \)).}
\label{tab:attention_heads}
\begin{tabular}{ccccc}
\hline
\textbf{heads (\(N_{\text{heads}}\))} & \textbf{rMAE} & \textbf{rRMSE} & \textbf{Loss} & \textbf{Computational Time (min)} \\
\hline
2 & 0.398 & 0.292 & 1.02$\times10^{-5}$ & 40 \\
\textbf{4} & \textbf{0.136} & \textbf{0.133} & \textbf{5.35$\times10^{-6}$} & \textbf{60} \\
8 & 1.897 & 1.319 & 8.17$\times10^{-6}$ & 167 \\
\hline
\end{tabular}
\end{table}

\section*{Declaration of Generative AI and AI-assisted Technologies in the Writing Process}

During the preparation of this manuscript, the authors used ChatGPT solely for language polishing and grammatical correction. After using this tool, the authors carefully reviewed, revised, and validated the content. The authors take full responsibility for the scientific integrity and content of the published article.

\section*{Author Declaration}
The authors have no conflicts of interest to disclose.

\section*{Data Availability}
The data that support the findings of this study are available from the corresponding author upon reasonable request.

\end{document}